%2multibyte Version: 5.50.0.2953 CodePage: 1252
%\input{tcilatex}

\documentclass[12pt,a4paper,thmsb]{article}
%%%%%%%%%%%%%%%%%%%%%%%%%%%%%%%%%%%%%%%%%%%%%%%%%%%%%%%%%%%%%%%%%%%%%%%%%%%%%%%%%%%%%%%%%%%%%%%%%%%%%%%%%%%%%%%%%%%%%%%%%%%%%%%%%%%%%%%%%%%%%%%%%%%%%%%%%%%%%%%%%%%%%%%%%%%%%%%%%%%%%%%%%%%%%%%%%%%%%%%%%%%%%%%%%%%%%%%%%%%%%%%%%%%%%%%%%%%%%%%%%%%%%%%%%%%%
\usepackage{eurosym}
\usepackage{amsmath}
\usepackage{amssymb}
\usepackage{amsfonts}
\usepackage{color}
\usepackage{amssymb}
\usepackage{pifont}

\setcounter{MaxMatrixCols}{10}
%TCIDATA{OutputFilter=LATEX.DLL}
%TCIDATA{Version=5.50.0.2953}
%TCIDATA{Codepage=1252}
%TCIDATA{<META NAME="SaveForMode" CONTENT="1">}
%TCIDATA{BibliographyScheme=Manual}
%TCIDATA{LastRevised=Wednesday, February 08, 2023 11:47:53}
%TCIDATA{<META NAME="GraphicsSave" CONTENT="32">}
%TCIDATA{Language=American English}
%TCIDATA{CSTFile=LaTeX article (bright).cst}

\newtheorem{theorem}{Theorem}

\newtheorem{proposition}{Proposition}
\newtheorem{corollary}{Corollary}

\newtheorem{remark}{Remark}
\newtheorem{example}{Example}

\newenvironment{proof}[1][Proof]{\noindent\textbf{#1.} }{\ \rule{0.5em}{0.5em}}

\oddsidemargin -5mm \evensidemargin -5mm \topmargin -15mm \textheight
24cm \textwidth 17cm

\begin{document}

\title{Anonymity in sharing the revenues from broadcasting sports
leagues\thanks{Financial support from Ministerio de Ciencia e Innovaci\'{o}n, MCIN/AEI/
10.13039/501100011033, through grants PID2020-113440GBI00 and
PID2020-115011GB-I00, Junta de Andaluc\'{\i}a through grant P18-FR-2933, and
Xunta de Galicia through grant ED431B2022/03 is gratefully acknowledged.}}
\author{\textbf{Gustavo Berganti\~{n}os}\thanks{%
ECOBAS, Universidade de Vigo, ECOSOT, 36310 Vigo, Espa\~{n}a. E-mail:
gbergant@uvigo.es.} \\
%EndAName
\textbf{Juan D. Moreno-Ternero}\thanks{%
Corresponding author. Department of Economics, Universidad Pablo de Olavide,
41013 Sevilla, Espa\~{n}a}}
\maketitle

\begin{abstract}
    We study the problem of sharing the revenues from broadcasting sports
    leagues axiomatically. Our key axiom is anonymity, the classical
    impartiality axiom. Other impartiality axioms already studied in
    these problems are equal treatment of equals, weak equal treatment of
    equals and symmetry. We study the relationship between all impartiality
    axioms. Besides we combine anonymity with other existing axioms in the literature. Some combinations give rise to new
    characterizations of well-known rules. The family of generalized split rules
    is characterized with anonymity, additivity and null team. The
    concede-and-divide rule is characterized with anonymity, additivity and
    essential team. Others combinations characterize new rules that had not been
    considered before. We provide three characterizations in which three axioms
    are the same (anonymity, additivity, and order preservation) the fourth one
    is different (maximum aspirations, weak upper bound, and non-negativity).
    Depending on the fourth axiom we obtain three different families of rules.
    In all of them concede-and-divide plays a central role.
\end{abstract}

%\bigskip

\noindent \textbf{\textit{JEL numbers}}\textit{: C71, D63, Z20.}\medskip {} 
%C72, D21, L83

\noindent \textbf{\textit{Keywords}}\textit{: Resource allocation, broadcasting problems, anonymity,
concede-and-divide.} %\medskip {}

\newpage

\section{Introduction}

The number of people paying to watch professional sports on television has
steadily increased in recent years. The amount of revenues raised from
selling broadcasting rights has consequently increased, to the extent of
becoming crucial for the management of sports organizations. Typically, the
sale is carried out collectively and then revenues are shared among
participating organizations. The revenue sharing rules differ worldwide. In
North American leagues, uniform sharing mostly prevails. In European
leagues, hybrid schemes involving lower bounds and performance-based rewards
are widespread.

In Berganti\~{n}os and Moreno-Ternero (2020a) we introduced what we dub the 
\textit{broadcasting problem}, a simple model in which the sharing process
is based on the broadcasting audiences. The model considers a double
round-robin tournament in which all games have a constant pay-per view fee.
Thus, the prior of the model is just a square matrix, whose entries indicate
the audience of the game involving the row team and the column team, at the
former's stadium. We have also studied this model theoretically in Berganti%
\~{n}os and Moreno-Ternero (2020b, 2021, 2022a, 2022b, 2022c, 2023a).
Besides, we apply it to La Liga (the Spanish Football League) in Berganti%
\~{n}os and Moreno-Ternero (2020a, 2021, 2023b).

The \textit{broadcasting problem} is an instance of resource allocation.
Other well known instances to which the literature has paid attention are:
airport problems (e.g., Littlechild and Owen, 1973), bankruptcy problems
(e.g., O'Neill, 1982; Thomson, 2019a), telecommunications problems (e.g.,
van den Nouweland et al. 1996), minimum cost spanning tree problems (e.g.,
Berganti\~{n}os and Vidal-Puga, 2021), transport problems (e.g., Algaba et
al. 2019, Esta\~{n} et al. 2021), inventory problems (e.g., Guardiola et
al., 2021), liability problems with rooted tree networks (e.g. Oishi et al.,
2023), knapsack problems (e.g., Arribillaga and Berganti\~{n}os 2023),
pooling games (Schlicher et al., 2020), m-attribute games (\"{O}zen et al.,
2022), urban consolidation centers (Hezarkhani et al., 2019), and scheduling
problems with delays (Gon\c{c}alves-Dosantos et al., 2020).

Two classical notions within the literature of resource allocation are 
\textit{additivity} and \textit{impartiality}. The former is a kind of
robustness axiom. Suppose that we can divide a problem in two small
problems. Two views are possible: we solve the original problem, or we solve
the small problems separately. \textit{Additivity} requires that these two
views produce the same outcome (e.g., Thomson, 2019b). In \textit{%
broadcasting problems}, \textit{additivity} says that revenues should be
additive on the audiences. As for \textit{impartiality}, it is a basic
requirement of justice which excludes ethically irrelevant aspects from the
allocation process. (e.g., Moreno-Ternero and Roemer, 2006). It is
frequently formalized in two ways. On the one hand, with the axiom of 
\textit{equal treatment of equals}, which says that if two agents are
equal/symmetric, according to some input of the problem, then both agents
should receive the same. On the other hand, with the axiom of \textit{%
anonymity}, which says that a permutation of the set of agents equally
permutes the allocation. This has an immediate implication: the name of the
agents becomes irrelevant in the allocation process. A rule that is partial
to some specific agent because of his name reflects an obvious unfairness in
practice. Thus, \textit{anonymity} is a normatively appealing axiom both in
theory and practice.

Three different versions of \textit{equal treatment of equals} have been
used in \textit{broadcasting problems}, depending on when two teams are
considered equal/symmetric. 
%(e.g., Berganti\~{n}os and Moreno-Ternero, 2020b, 2021, 2021b, 2022a). 
In the first one, precisely named \textit{equal treatment of equals}, two
teams are considered equal if each time they play a third team the audiences
are the same in both games (e.g., Berganti\~{n}os and Moreno-Ternero,
2020b). In the second one, named \textit{weak equal treatment of equals}, an
additional condition is imposed to consider two teams as equal: the
audiences in the two games played by the two teams (home and away) also
coincide (e.g., Berganti\~{n}os and Moreno-Ternero, 2022b). In the third
one, named \textit{symmetry}, two teams are considered equal simply when
their aggregate audiences in the league coincide (e.g., Berganti\~{n}os and
Moreno-Ternero, 2021). Thus, it is obviously the strongest of the three
axioms as it is the one imposing the least demanding condition for teams to
be equal/symmetric.\footnote{%
We could thus rename this axiom as \textit{strong equal treatment of equals}%
, to align with the names of the other two axioms, but we prefer to keep the
terms used in previous literature to ease the comparison with the results
from that literature.}

In this paper, we focus instead on \textit{anonymity} in broadcasting
problems. We first argue that \textit{anonymity} is stronger than \textit{%
weak \textit{equal treatment of equals}} but it is neither related to 
\textit{equal treatment of equals} nor to \textit{symmetry}. We then
consider its implications, when combined with other standard axioms in the
literature. As we shall show, this allows us to uncover the structure of
this rich model of \textit{broadcasting problems}.

Our starting point is its combination with \textit{additivity}. As we
recently showed in Berganti\~{n}os and Moreno-Ternero (2023a), the
combination of both axioms characterizes a family of general rules where the
amount received by each team has three parts: one depending on the home
audience of this team, another depending on the away audience of this team,
and the third part depending on the total audience of the tournament.

Two focal rules in \textit{broadcasting problems} are members of the family
just described: the so-called \textit{equal-split rule} and \textit{%
concede-and-divide}. In Berganti\~{n}os and Moreno-Ternero (2020a) we argued
that viewers of each game can essentially be divided in two categories:
those watching the game because they are fans of one of the teams playing
(called hard-core fans) and those watching the game because they think that
the specific combination of teams renders the game interesting (called
neutral fans). Besides, we argued that the revenue generated by hard-core
fans should be assigned to the corresponding team and the revenue generated
by neutral fans should be divided equally between both teams. The \textit{%
equal-split} rule and \textit{concede-and-divide} are extreme rules from the
point of view of treating fans. The \textit{equal-split} rule assumes that
only neutral fans exist, whereas \textit{concede-and-divide} assumes that
there are as many hard-core fans as possible (compatible with the
audiences). Then, \textit{equal-split} divides the audience of each game
equally among the two teams. \textit{Concede-and-divide} concedes to each
team (for each game) its number of hard-core fans and divides equally the
rest. Both rules are characterized by \textit{additivity}, \textit{equal
treatment of equals} and a third axiom: \textit{null team} for the former,
and \textit{essential team} for the latter (e.g., Berganti\~{n}os and
Moreno-Ternero, 2020a). We show in this paper that replacing \textit{equal
treatment of equals} by \textit{anonymity} in those results is a move with
striking consequences. More precisely, we show that the combination of 
\textit{anonymity} with \textit{additivity} and \textit{null team}
characterizes a family of rules where the audience of each game is divided
in proportions $\lambda $ and $1-\lambda $ for the home and the away team
respectively (the \textit{equal-split rule} corresponds to the case where $%
\lambda =0.5).$ We show, on the other hand, that the combination of \textit{%
anonymity} with \textit{additivity} and \textit{essential team} still
characterizes \textit{concede-and-divide}. 
%Thus, the impact of \textit{anonymity} is different in both cases.

Another focal rule satisfying \textit{anonymity} and \textit{additivity} is
the \textit{uniform rule} (the audience of the tournament is divided equally
among all teams). Convex combinations of each of the resulting pairs from
the three focal rules have been considered. The so-called family of $EC$
rules is made of convex combinations of the \textit{equal-split rule} and 
\textit{concede-and-divide}. The family of $UE$ rules is made of convex
combinations of the \textit{uniform rule} and the \textit{equal-split rule}
and the family of $UC$ rules is made of convex combinations of the \textit{%
uniform rule} and \textit{concede-and-divide}. The three families of rules
have been characterized in Berganti\~{n}os and Moreno-Ternero (2021, 2022a)
always involving one of the three axioms formalizing \textit{equal treatment
of equals}. We show in this paper that, if we consider \textit{anonymity}
instead, we derive three new families of rules, dubbed \textit{extended $EC$
rules}, \textit{extended $UE$ rules}, and \textit{extended $UC$ rules},
which contain, respectively, the three families mentioned above. 
%We describe next the first family (similar descriptions can be provided for the other two extended families). 
To wit, \textit{extended $EC$ rules} are defined by two parameters $%
x^{\prime },y^{\prime }\in \left[ 0,1\right] .$ Each team $i$ receives an
allocation that is the sum of two parts. The first part is given by an $EC$
rule where $\lambda $ depends on $x^{\prime }$ and $y^{\prime }$. The second
part depends on $x^{\prime },$ $y^{\prime }$ and the allocation \textit{%
concede-and-divide} yields for an associated broadcasting problem. When $%
x^{\prime }\geq y^{\prime }$ the associated problem is obtained by
nullifying the home audiences of team $i$, whereas when $x^{\prime }\leq
y^{\prime }$ the associated problem is obtained by nullifying the away
audiences of team $i$. When $x^{\prime }=y^{\prime }$ the rule just becomes
an $EC$ rule. The other two extended families can be described in similar
ways. It turns out we obtain characterizations of the three families of
extended rules when combining \textit{additivity} and \textit{anonymity}
with other axioms from the literature.

We conclude the introduction acknowledging that our paper is related to some
other papers in which families of rules are also characterized using similar
axioms to those used in this paper. Some examples are the following. In
cooperative games with transferable utility, Casajus and Huettner (2013),
van den Brink et al. (2013), and Casajus and Yokote (2019) characterize the
family of values arising from the convex combination of the Shapley value
and the egalitarian value. In our setting, this would correspond to the
family of $UE$ rules. In minimum cost spanning tree problems, making use of 
\textit{additivity}, Trudeau (2014) characterizes the convex combination of
the so-called folk rule and the cycle-complete rule, whereas Berganti\~{n}os
and Lorenzo (2021) characterize the family of Kruskal sharing rules. In
bankruptcy problems, Thomson (2015a, 2015b) characterizes families of rules
satisfying the standard non-negativity condition.

The rest of the paper is organized as follows. In Section 2, we introduce
the model, rules and axioms. In Section 3, we obtain new characterizations
of some well-known families of rules. In Section 4, we obtain
characterizations of new families of rules. We conclude in Section 5.

%\bigskip

\section{The model}

We consider the model introduced in Berganti\~{n}os and Moreno-Ternero
(2020a). Let $N$ be a finite set of teams. Its cardinality is denoted by $n$%
. We assume $n\geq 3$. For each pair of teams $i,j\in N$, we denote by $%
a_{ij}$ the broadcasting audience (number of viewers) for the game played by 
$i$ and $j$ at $i$'s stadium. We use the notational convention that $%
a_{ii}=0 $, for each $i\in N$. Let $A\in {\mathcal{A}}_{n\times n}$ denote
the resulting matrix of broadcasting audiences generated in the whole
tournament involving the teams within $N$.\footnote{%
We are therefore assuming a round-robin tournament in which each team plays
in turn against each other team twice: once home, another away. This is the
usual format of the main European football leagues. Our model could also be
extended to leagues in which some teams play other teams a different number
of times and play-offs at the end of the regular season, which is the usual
format of North American professional sports. In such a case, $a_{ij}$ is
the broadcasting audience in all games played by $i$ and $j$ at $i$'s
stadium.} As the set $N$ will be fixed throughout our analysis, we shall not
explicitly consider it in the description of each problem. Each matrix $A\in 
{\mathcal{A}}_{n\times n}$ with zero entries in the diagonal will thus
represent a \textit{problem} and we shall refer to the set of problems as $%
\mathcal{P}$.

Let $\alpha _{i}\left( A\right) $ denote the total audience achieved by team 
$i$, i.e., 
\begin{equation*}
\alpha _{i}\left( A\right) =\sum_{j\in N}(a_{ij}+a_{ji}).
\end{equation*}%
Without loss of generality, we normalize the revenue generated from each
viewer to $1$ (to be interpreted as the \textquotedblleft pay per
view\textquotedblright\ fee). Thus, we sometimes refer to $\alpha _{i}\left(
A\right) $ by the \textit{claim} of team $i$. When no confusion arises, we
write $\alpha _{i}$ instead of $\alpha _{i}\left( A\right) $.

For each $A\in {\mathcal{A}}_{n\times n}$, let $||A||$ denote the total
audience of the tournament. Namely, 
\begin{equation*}
||A||=\sum_{i,j\in N}a_{ij}=\frac{1}{2}\sum_{i\in N}\alpha _{i}.
\end{equation*}

\subsection{Rules}

A (sharing) \textbf{rule} $\left( R\right) $ is a mapping that associates
with each problem the list of the amounts teams get from the total revenue.
As we have mentioned above we normalize the revenue generated from each
viewer to $1$. Formally, $R:\mathcal{P}\rightarrow \mathbb{R}^{N}$ is such
that, for each $A\in \mathbf{\mathcal{P}}$, 
\begin{equation*}
\sum_{i\in N}R_{i}(A)=||A||.
\end{equation*}

We first introduce three focal rules that have been studied in Berganti\~{n}%
os and Moreno-Ternero (2020a; 2020b; 2021; 2022a; 2022b; 2022c; 2023c).

\bigskip

The \textit{uniform rule} divides equally among all teams the overall
audience of the whole tournament. Formally,

\medskip \noindent \textbf{Uniform rule}, $U$: for each $A\in \mathbf{%
\mathcal{P}} $, and each $i\in N$, 
\begin{equation*}
U_{i}(A)=\frac{\left\vert \left\vert A\right\vert \right\vert }{n}.
\end{equation*}

The \textit{equal-split rule} divides the audience of each game equally,
among the two participating teams. Formally,

\medskip \noindent \textbf{Equal-split rule}, $ES$: for each $A \in \mathbf{%
\mathcal{P}}$, and each $i\in N$, 
\begin{equation*}
ES_{i}(A) =\frac{\alpha _{i}}{2}.
\end{equation*}

For each game, \textit{concede-and-divide} concedes each team its number of
hard-core fans and divides equally the rest. We introduce it in an
equivalent and simpler way.\footnote{%
See Berganti\~{n}os and Moreno-Ternero (2020a) for a detailed discussion.}
Formally,

\medskip \noindent \textbf{Concede-and-divide}, $CD$: for each $A\in \mathbf{%
\mathcal{P}}$, and each $i\in N$, 
\begin{equation*}
CD_{i}(A)=\frac{\left( n-1\right) \alpha _{i}-\left\vert \left\vert
A\right\vert \right\vert }{n-2}.
\end{equation*}

%\bigskip

We now introduce a family of rules studied in Berganti\~{n}os and
Moreno-Ternero (2022a; 2022b). For each $\lambda \in \mathbb{R}$ and each
game $\left( i,j\right) ,$ $S^{\lambda }$ divides the audience $a_{ij}$
among the teams $i$ and $j$ proportionally to $\left( 1-\lambda ,\lambda
\right) $. Formally, for each $A\in \mathcal{P}$ and each $i\in N,$ 
\begin{equation*}
S_{i}^{\lambda }\left( A\right) =\sum_{j\in N\backslash \left\{ i\right\}
}\left( 1-\lambda \right) a_{ij}+\sum_{j\in N\backslash \left\{ i\right\}
}\lambda a_{ji}.
\end{equation*}

The \textit{equal-split rule} corresponds to the case where $\lambda =0.5.$
When $\lambda =0$ all the audience is assigned to the home team and when $%
\lambda =1$ all the audience is assigned to the away team. We say that $R$
is a \textbf{split rule} if $R\in \left\{ S^{\lambda }:\lambda \in \left[ 0,1%
\right] \right\} .$ We say that $R$ is a \textbf{generalized split rule} if $%
R\in \left\{ S^{\lambda }:\lambda \in \mathbb{R}\right\} .$

\bigskip

The next family of rules, studied in Berganti\~{n}os and Moreno-Ternero
(2022c; 2023a), contains all the rules defined above. The amount received by
each team $i$ has three parts: one depending on the home audience of this
team $\left( \sum\limits_{j\in N\backslash \left\{ i\right\} }a_{ij}\right)
, $ other depending on the away audience of this team $\left(
\sum\limits_{j\in N\backslash \left\{ i\right\} }a_{ji}\right) ,$ and the
third part depending on the total audience of the tournament $\left(
\left\vert \left\vert A\right\vert \right\vert \right) .$ Formally,

\medskip \noindent \textbf{General rules }$\left\{ G^{xyz}\right\}
_{x+y+nz=1}$. For each trio $x,y,z\in \mathbb{R}$ with $x+y+nz=1,$ each $%
A\in \mathbf{\mathcal{P}}$, and each $i\in N$, 
\begin{equation*}
G_{i}^{xyz}(A)=x\sum_{j\in N\backslash \left\{ i\right\} }a_{ij}+y\sum_{j\in
N\backslash \left\{ i\right\} }a_{ji}+z\left\vert \left\vert A\right\vert
\right\vert .
\end{equation*}

The next table yields the value for the trio $x,y,z$, so that we obtain the
three focal rules introduced above. 
\begin{equation*}
\begin{tabular}{cccc}
& $x$ & $y$ & $z$ \\ 
\textit{Uniform rule} & $0$ & $0$ & $\frac{1}{n}$ \\ 
\textit{Equal-split rule} & $\frac{1}{2}$ & $\frac{1}{2}$ & $0$ \\ 
\textit{Concede-and-divide} & $\frac{n-1}{n-2}$ & $\frac{n-1}{n-2}$ & $\frac{%
-1}{n-2}$%
\end{tabular}%
\end{equation*}

%\bigskip
The next three families of rules, which have been studied in Berganti\~{n}os
and Moreno-Ternero (2021; 2022a; 2022b) are also included in the family of 
\textit{general rules}, as they are defined via convex combinations of two
of the focal rules.

\textbf{EC rules} $\left\{ EC^{\lambda }\right\} _{\lambda \in \left[ 0,1%
\right] }$. For each $\lambda \in \left[ 0,1\right] ,$ each $A\in \mathbf{%
\mathcal{P}}$, and each $i\in N$, 
\begin{equation*}
EC_{i}^{\lambda }\left( A\right) =\lambda ES_{i}\left( A\right) +(1-\lambda
)CD_{i}\left( A\right) .
\end{equation*}

%\bigskip

\textbf{UC rules} $\left\{ UC^{\lambda }\right\} _{\lambda \in \left[ 0,1%
\right] }$. For each $\lambda \in \left[ 0,1\right] ,$ each $A\in \mathbf{%
\mathcal{P}}$, and each $i\in N$, 
\begin{equation*}
UC_{i}^{\lambda }\left( A\right) =\lambda U_{i}\left( A\right) +(1-\lambda
)CD_{i}\left( A\right) .
\end{equation*}

\textbf{UE rules} $\left\{ UE^{\lambda }\right\} _{\lambda \in \left[ 0,1%
\right] }$. For each $\lambda \in \left[ 0,1\right] ,$ each $A\in \mathbf{%
\mathcal{P}}$, and each $i\in N$, 
\begin{equation*}
UE_{i}^{\lambda }\left( A\right) =\lambda U_{i}\left( A\right) +(1-\lambda
)ES_{i}\left( A\right) .
\end{equation*}

%\bigskip

Berganti\~{n}os and Moreno-Ternero (2020a) prove that, for each $A\in 
\mathcal{P}$, 
\begin{equation*}
ES(A)=\frac{n}{2\left( n-1\right) }U(A)+\frac{n-2}{2\left( n-1\right) }CD(A).
\end{equation*}

Thus, 
\begin{eqnarray*}
\left\{ UC^{\lambda }\right\} _{\lambda \in \left[ 0,1\right] } &=&\left\{
UE^{\lambda }\right\} _{\lambda \in \left[ 0,1\right] }\cup \left\{
EC^{\lambda }\right\} _{\lambda \in \left[ 0,1\right] },\text{ and } \\
ES &=&\left\{ UE^{\lambda }\right\} _{\lambda \in \left[ 0,1\right] }\cap
\left\{ EC^{\lambda }\right\} _{\lambda \in \left[ 0,1\right] }.
\end{eqnarray*}

\bigskip

We now give an example to illustrate the various allocation rules.

\bigskip

\begin{example}
\label{example three teams} Let $A\in \mathbf{\mathcal{P}}$ be such that 
\begin{equation*}
A=\left( 
\begin{array}{ccc}
0 & 1200 & 1030 \\ 
750 & 0 & 140 \\ 
630 & 210 & 0%
\end{array}%
\right)
\end{equation*}

Then, $||A||=3960$ and $\left( \alpha _{i}\right) _{i\in N}=\left(
3610,2300,2010\right) .$

The three focal rules yield the following allocations in this example:%
\begin{equation*}
\begin{tabular}{cccc}
Rule & Team 1 & Team 2 & Team 3 \\ 
$U$ & 1320 & 1320 & 1320 \\ 
$ES$ & 1805 & 1150 & 1005 \\ 
$CD$ & 3260 & 640 & 60%
\end{tabular}%
\end{equation*}

We also compute the allocations from several generalized split rules $%
S^{\lambda }$:%
\begin{equation*}
\begin{tabular}{cccc}
$\lambda $ & Team 1 & Team 2 & Team 3 \\ 
0 & 2230 & 890 & 840 \\ 
0.2 & 2060 & 994 & 906 \\ 
1 & 1380 & 1410 & 1170 \\ 
4 & -1170 & 2970 & 2160%
\end{tabular}%
\end{equation*}

And the allocations from several general rules $G^{xyz}$: 
\begin{equation*}
\begin{tabular}{cccc}
$\left( x,y,z\right) $ & Team 1 & Team 2 & Team 3 \\ 
$\left( 0.5,0.2,0.1\right) $ & 1787 & 1123 & 1050 \\ 
$\left( 1,3,-1\right) $ & 2410 & 1160 & 390%
\end{tabular}%
\end{equation*}

Finally, we compute the allocations from $EC^{\lambda },$ $UC^{\lambda },$
and $UE^{\lambda }$ when $\lambda =0.5$: 
\begin{equation*}
\begin{tabular}{cccc}
Rule & Team 1 & Team 2 & Team 3 \\ 
$EC^{0.5}$ & 2532.5 & 895 & 532.5 \\ 
$UC^{0.5}$ & 2290 & 980 & 690 \\ 
$UE^{0.5}$ & 1562.5 & 1235 & 1162.5%
\end{tabular}%
\end{equation*}
\end{example}

%As the three focal rules (uniform, \textit{ \textit{equal-split rule} rule}, and concede-and-divide) are general rules and any convex combination of general rules is also a general rule, we deduce that the three families of rules ($UC,$ $EC,$ and $UE)$ are also general rules.

%\bigskip

\subsection{Axioms}

We now introduce the axioms considered in this paper.

%\bigskip

Our key axiom is \textit{anonymity}, which says that a permutation of the
set of teams equally permutes the allocation. Formally, let $\sigma $ be a
permutation of the set of teams. Thus, $\sigma :N\rightarrow N$ such that $%
\sigma \left( i\right) \neq \sigma \left( j\right) $ when $i\neq j.$ Given a
permutation $\sigma $ and $A\in \mathcal{P}$, we define the problem $%
A^{\sigma }$ where for each $i,j\in N,$ $a_{ij}^{\sigma }=a_{\sigma \left(
i\right) \sigma \left( j\right) }.$

\textbf{Anonymity }$\left( AN\right) $: For each $A\in \mathcal{P}$, each
permutation $\sigma ,$ and each $i\in N$, 
\begin{equation*}
R_{i}(A)=R_{\sigma \left( i\right) }(A^{\sigma }).
\end{equation*}

The next axiom says that revenues should be additive on $A$. Formally, 
%\bigskip

\textbf{Additivity }$\left( AD\right) $: For each pair $A$ and $A^{\prime
}\in \mathcal{P}$, 
\begin{equation*}
R\left( A+A^{\prime }\right) =R(A)+R\left( A^{\prime }\right) .
\end{equation*}

%\bigskip

All the results presented in this paper involve the two axioms introduced
above. We also consider other axioms. The first three axioms represent the
three different ways of applying the principle of impartiality we mentioned
at the Introduction.

Equal treatment of equals says that if two teams have the same audiences,
when facing each of the other teams, then they should receive the same
amount. This axiom has been used in Berganti\~{n}os and Moreno-Ternero
(2020a; 2020b; 2021; 2022a; 2022b).

\textbf{Equal treatment of equals }$\left( ETE\right) $: For each $A\in 
\mathcal{P}$, and each pair $i,j\in N$ such that $a_{ik}=a_{jk}$, and $%
a_{ki}=a_{kj}$, for each $k\in N\setminus \{i,j\}$, 
\begin{equation*}
R_{i}(A)=R_{j}(A).
\end{equation*}

%\bigskip

In Berganti\~{n}os and Moreno-Ternero (2022a; 2022b; 2022c) we also require
something more to consider two teams as equals. Not only the two teams must
have the same audiences when facing each of the other teams, but they also
must have the same audiences when facing themselves at each stadium.
Formally,

\textbf{Weak equal treatment of equals }$\left( WETE\right) $: For each $%
A\in \mathcal{P}$, and each pair $i,j\in N$ such that $a_{ij}=a_{ji},$ $%
a_{ik}=a_{jk}$, and $a_{ki}=a_{kj}$, for each $k\in N\setminus \{i,j\}$, 
\begin{equation*}
R_{i}(A)=R_{j}(A).
\end{equation*}

%\bigskip

Finally, \textit{symmetry} requires less to consider two teams as
equal/symmetric. More precisely, the axiom says that if two teams have the
same aggregate audience in the tournament, then they should receive the same
amount. This axiom has been considered in Berganti\~{n}os and Moreno-Ternero
(2021).

\textbf{Symmetry} $\left( SYM\right) $: For each $A\in \mathcal{P}$, and
each pair $i,j\in N$, such that $\alpha _{i}\left( A\right) =\alpha
_{j}\left( A\right) $,%
\begin{equation*}
R_{i}(A)=R_{j}(A).
\end{equation*}

%\bigskip

The following axioms strengthen the previous ones by saying that if the
audience of team $i$ is, game by game, not smaller than the audience of team 
$j$, then that team $i$ should not receive less than team $j$.

\textbf{Order preservation }$\left( OP\right) $: For each $A\in \mathcal{P}$
and each pair $i,j\in N$, such that, for each $k\in N\backslash \left\{
i,j\right\} $, $a_{ik}\geq a_{jk}$ and $a_{ki}\geq a_{kj}$ we have that 
\begin{equation*}
R_{i}(A)\geq R_{j}(A).
\end{equation*}

\textbf{Home order preservation }$\left( HOP\right) $: For each $A\in 
\mathcal{P}$ and each pair $i,j\in N$, such that, for each $k\in N\backslash
\left\{ i,j\right\} $, $a_{ik}\geq a_{jk}$, $a_{ki}\geq a_{kj}$, and $%
a_{ij}\geq a_{ji}$ we have that 
\begin{equation*}
R_{i}(A)\geq R_{j}(A).
\end{equation*}

\textbf{Away order preservation }$\left( AOP\right) $: For each $A\in 
\mathcal{P}$ and each pair $i,j\in N$, such that, for each $k\in N\backslash
\left\{ i,j\right\} $, $a_{ik}\geq a_{jk}$, $a_{ki}\geq a_{kj}$, and $%
a_{ji}\geq a_{ij}$ we have that 
\begin{equation*}
R_{i}(A)\geq R_{j}(A).
\end{equation*}

%\bigskip

The next two axioms refer to the performance of the rule with respect to
somewhat pathological teams. First, \textit{null team} says that if a team
has a null audience, then such a team gets no revenue. Second, \textit{%
essential team} says that if only the games played by some team have
positive audience, then such a team should receive all its audience.
Formally,

\textbf{Null team }$\left( NT\right) $: For each $A\in \mathcal{P}$, and
each $i\in N$, such that $a_{ij}=0=a_{ji}$, for each $j\in N$, 
\begin{equation*}
R_{i}(A)=0.
\end{equation*}

\textbf{Essential team }$\left( ET\right) $: For each $A\in \mathcal{P}$ and
each $i\in N$ such that $a_{jk}=0$ for each pair $\left\{ j,k\right\} \in
N\backslash \left\{ i\right\} $, 
\begin{equation*}
R_{i}(A)=\alpha _{i}\left( A\right) .
\end{equation*}

%\bigskip

The next three axioms provide natural (lower/upper) bounds to the amount a
team could receive.

The first one says that each team should receive, at most, the total
audience of the games it played.

\textbf{Maximum aspirations }$\left( MA\right) $: For each $A\in \mathcal{P}$
and each $i\in N$, 
\begin{equation*}
R_{i}(A)\leq \alpha _{i}\left( A\right) .
\end{equation*}

%\bigskip

The second one says that each team should receive, at most, the total
audience of all games in the tournament.

\textbf{Weak upper bound }$\left( WUB\right) $: For each $A\in \mathcal{P}$
and each $i\in N$, 
\begin{equation*}
R_{i}(A)\leq \left\vert \left\vert A\right\vert \right\vert .
\end{equation*}

The third axiom says that no team should receive negative awards.

\textbf{Non-negativity }$\left( NN\right) $. For each $A\in \mathcal{P}$ and
each $i\in N,$ 
\begin{equation*}
R_{i}(A)\geq 0.
\end{equation*}

%\bigskip

The next result, whose straightforward proof we omit, summarizes the logical
relations between the axioms introduced above.\footnote{$A\rightarrow B$
means that if a rule satisfies property $A,$ than it also satisfies property 
$B.$}

\begin{proposition}
The following statements hold:

\label{rel axioms}$\left( 1\right) $ $AN\rightarrow WETE\leftarrow
ETE\leftarrow SYM.$

$\left( 2\right) $ $ETE\leftarrow OP\rightarrow WETE.$

$\left( 3\right) $ $MA\rightarrow WUB\leftarrow NN.$

$\left( 4\right) $ $OP\rightarrow \left\{ HOP,AOP\right\} \rightarrow OP$
\end{proposition}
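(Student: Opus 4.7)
The proposition collects four groups of implications, each of which I would prove by unpacking the definitions and checking that the hypothesis of the ``target'' axiom forces the hypothesis of the ``source'' axiom. Since none of the arguments is deep, the main task is to organize them in the right order and to choose the right permutation or inequality each time.

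For part $(1)$, I would handle the three implications separately. For $SYM \rightarrow ETE$, note that if $i,j$ satisfy the hypothesis of $ETE$ (so $a_{ik}=a_{jk}$ and $a_{ki}=a_{kj}$ for every $k\notin\{i,j\}$), then writing out $\alpha_i=\sum_{k\neq i}(a_{ik}+a_{ki})$ and the analogous sum for $j$ and splitting off the term $k=j$ (resp. $k=i$), the symmetric contributions $a_{ij}+a_{ji}$ appear on both sides, so $\alpha_i=\alpha_j$ and $SYM$ applies. The implication $ETE\rightarrow WETE$ is immediate since the hypothesis of $WETE$ is strictly stronger. For $AN\rightarrow WETE$, given $i,j$ satisfying the hypothesis of $WETE$, I would take $\sigma$ to be the transposition swapping $i$ and $j$ and verify entry-by-entry that $A^\sigma=A$: the equalities $a_{ij}=a_{ji}$, $a_{ik}=a_{jk}$, $a_{ki}=a_{kj}$ exactly match the four cases arising in the definition of $A^\sigma$. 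Then $AN$ gives $R_i(A)=R_{\sigma(i)}(A^\sigma)=R_j(A)$. The key step here, and the only one that requires a small idea, is choosing the right permutation.

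For part $(2)$, both $OP\rightarrow ETE$ and $OP\rightarrow WETE$ follow by applying $OP$ twice: the equalities in the hypotheses of $ETE$ or $WETE$ give both $a_{ik}\geq a_{jk}$, $a_{ki}\geq a_{kj}$ and the reverse inequalities, so $OP$ yields $R_i(A)\geq R_j(A)$ and $R_j(A)\geq R_i(A)$. For part $(3)$, the implication $MA\rightarrow WUB$ uses the inequality $\alpha_i(A)=\sum_{j\neq i}(a_{ij}+a_{ji})\leq\sum_{k,l\in N}a_{kl}=\|A\|$, which holds because all audiences are nonnegative. For $NN\rightarrow WUB$, I would use efficiency: $R_i(A)=\|A\|-\sum_{j\neq i}R_j(A)\leq\|A\|$ since each $R_j(A)\geq 0$.

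For part $(4)$, the implications $OP\rightarrow HOP$ and $OP\rightarrow AOP$ are immediate, since both $HOP$ and $AOP$ merely add an extra hypothesis to $OP$. The reverse direction $\{HOP,AOP\}\rightarrow OP$ is the only one worth spelling out: given $i,j$ satisfying the hypothesis of $OP$, either $a_{ij}\geq a_{ji}$ and $HOP$ applies, or $a_{ji}\geq a_{ij}$ and $AOP$ applies; in either case $R_i(A)\geq R_j(A)$. No step poses a real obstacle; the whole proposition is essentially bookkeeping, and the authors are correct to say the proof can be omitted.
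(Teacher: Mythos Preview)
Your proposal is correct. The paper explicitly omits the proof of this proposition, calling it straightforward, and your verification---checking each implication directly from the definitions, with the transposition $\sigma=(i\,j)$ for $AN\rightarrow WETE$, the double application of $OP$ for part $(2)$, efficiency for $NN\rightarrow WUB$, and the dichotomy $a_{ij}\gtrless a_{ji}$ for $\{HOP,AOP\}\rightarrow OP$---is exactly the routine argument the authors had in mind.
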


%\bigskip

As for the impartiality axioms, we note first that there exist rules
satisfying $WETE$ but violating $AN$. For instance, the rule that divides
the total audience equally among team 1 and all equally-deserving teams,
according to $WETE.$ Obviously this rule satisfies $WETE.$ The next example
shows that it does not satisfy $AN.$

\begin{example}
Let $A\in \mathbf{\mathcal{P}}$ be such that 
\begin{equation*}
A=\left( 
\begin{array}{ccc}
0 & 6 & 4 \\ 
6 & 0 & 4 \\ 
2 & 2 & 0%
\end{array}%
\right)
\end{equation*}

Then $R\left( A\right) =\left( 12,12,0\right) .$

Let $\sigma $ be such that $\sigma \left( 1\right) =2,$ $\sigma \left(
2\right) =3,$ and $\sigma \left( 3\right) =1.$ Now 
\begin{equation*}
A^{\sigma }=\left( 
\begin{array}{ccc}
0 & 4 & 6 \\ 
2 & 0 & 2 \\ 
6 & 4 & 0%
\end{array}%
\right)
\end{equation*}

Then $R\left( A^{\sigma }\right) =\left( 24,0,0\right) .$ As $R_{1}\left(
A\right) \neq R_{2}\left( A^{\sigma }\right) $ we deduce that $R$ does not
satisfy $AN.$
\end{example}

Likewise, there exist rules satisfying $ETE$ but violating $AN$. For
instance, the rule that divides the total audience equally among team 1 and
all equally-deserving teams, according to $ETE.$ Obviously this rule
satisfies $ETE.$ The previous example shows that it does not satisfy $AN$
(notice that teams 1 and 2 are equally-deserving teams, according to $ETE).$

Finally, there exist rules satisfying $AN$ but violating $ETE$. For
instance, the rule $S^{1}$ introduced above. By Theorem 1 in Berganti\~{n}os
and Moreno-Ternero (2023a), $S^{1}$ satisfies $AN.$ The next example shows
that $S^{1}$ does not satisfy $ETE,$

\begin{example}
Let $A\in \mathbf{\mathcal{P}}$ be such that 
\begin{equation*}
A=\left( 
\begin{array}{ccc}
0 & 4 & 6 \\ 
2 & 0 & 6 \\ 
3 & 3 & 0%
\end{array}%
\right)
\end{equation*}

Teams 1 and 2 satisfy the conditions of $ETE.$ Nevertheless, $S^{1}\left(
A\right) =\left( 5,7,12\right) .$
\end{example}

%\bigskip \bigskip

\section{New characterizations of well-known families of rules}

%In this section we characterize some rules, or family of rules, already studied in \textit{broadcasting problems} using \textit{anonymity} and some other axioms.\bigskip

Our first result is a counterpart of Theorem 1 in Berganti\~{n}os and
Moreno-Ternero (2020a), obtained by replacing \textit{equal treatment of
equals} therein by \textit{anonymity}.

\begin{theorem}
\label{char gen MS}The following statements hold:

$\left( 1\right) $ A rule satisfies \textit{additivity}, \textit{anonymity},
and \textit{null team} if and only if it is a generalized split rule.

$\left( 2\right) $ A rule satisfies \textit{additivity}, \textit{anonymity},
and \textit{essential team} if and only if it is concede-and divide.
\end{theorem}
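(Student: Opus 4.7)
The starting point for both parts is the characterization from Berganti\~{n}os and Moreno-Ternero (2023a), cited in the introduction: a rule satisfies \textit{additivity} and \textit{anonymity} if and only if it is a general rule $G^{xyz}$ for some $x,y,z\in \mathbb{R}$ with $x+y+nz=1$. Granted this, each part of the theorem reduces to translating the third axiom into a linear constraint on $(x,y,z)$ and solving.

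For part $(1)$, I would fix a team $i$ and consider matrices $A\in\mathcal{P}$ whose $i$th row and $i$th column are identically zero while the remaining entries are arbitrary non-negative numbers. Then $\sum_{j\neq i}a_{ij}=\sum_{j\neq i}a_{ji}=0$ but $\|A\|$ can take any non-negative value, so the general-rule formula collapses to $G^{xyz}_{i}(A)=z\|A\|$. The axiom \textit{null team} forces this to vanish for arbitrary $\|A\|$, hence $z=0$; the constraint $x+y+nz=1$ then reads $x+y=1$. Setting $\lambda=y=1-x$ turns $G^{xyz}$ into
\begin{equation*}
G^{xyz}_{i}(A)=(1-\lambda)\sum_{j\neq i}a_{ij}+\lambda\sum_{j\neq i}a_{ji}=S^{\lambda}_{i}(A),
\end{equation*}
a generalized split rule. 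The converse is routine: every $S^{\lambda}$ is a general rule (so \textit{additive} and \textit{anonymous}) and manifestly assigns $0$ to any null team.

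For part $(2)$, I would fix a team $i$ and consider matrices $A\in\mathcal{P}$ in which $a_{jk}=0$ for every pair $\{j,k\}\subseteq N\setminus\{i\}$, while $\{a_{ij},a_{ji}\}_{j\neq i}$ range freely over non-negative reals. For such $A$ one has $\|A\|=\sum_{j\neq i}(a_{ij}+a_{ji})=\alpha_i$, and the general-rule formula becomes
\begin{equation*}
G^{xyz}_{i}(A)=(x+z)\sum_{j\neq i}a_{ij}+(y+z)\sum_{j\neq i}a_{ji}.
\end{equation*}
The axiom \textit{essential team} demands this to equal $\alpha_i=\sum_{j\neq i}(a_{ij}+a_{ji})$ for every independent choice of home and away audiences of $i$, so matching coefficients yields the two separate equations $x+z=1$ and $y+z=1$. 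Together with $x+y+nz=1$ these force $z=-1/(n-2)$ and $x=y=(n-1)/(n-2)$, which by the parameter table in Section 2 is precisely \textit{concede-and-divide}. The converse is immediate since $CD$ is a general rule and a direct computation shows $CD_{i}(A)=\alpha_i$ on essential-team matrices.

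The only delicate point is in part $(2)$: a symmetric test matrix with $a_{ij}=a_{ji}$ would deliver only the single equation $x+y+2z=2$, which combined with $x+y+nz=1$ still leaves a one-parameter family. The key move is therefore to let the home and away audiences of team $i$ vary independently (e.g.\ by taking one $a_{ij}$ positive with all other entries zero, and then symmetrically for $a_{ji}$), so that the coefficients of $\sum a_{ij}$ and $\sum a_{ji}$ must be matched separately, uniquely pinning down both $x$ and $y$.
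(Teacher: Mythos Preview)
Your argument is correct and follows essentially the same route as the paper: both start from the $\{$additivity, anonymity$\}$ $\Leftrightarrow$ general-rule characterization and then translate the third axiom into linear constraints on $(x,y,z)$. The only cosmetic difference is that the paper works throughout with the single-game matrices $\boldsymbol{1}^{ij}$ (where, for part~(2), both $i$ and $j$ are simultaneously essential, so $R_i(\boldsymbol{1}^{ij})=R_j(\boldsymbol{1}^{ij})=1$ delivers $x+z=1$ and $y+z=1$ in one stroke), whereas you vary the home and away audiences of $i$ independently to separate the two coefficients; the constraints obtained are identical.
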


\begin{proof}
We first discuss the main ideas of this proof (which will also be considered
in the rest of the proofs). All our theorems characterize families of rules
(in some cases, a single rule) with a combination of several axioms that
include \textit{additivity} and \textit{anonymity}. The proof of the
theorems has two parts. First, to prove that each rule of the family
satisfies the combination of the axioms. This is made by checking that the
rule satisfies the formula stated by the axiom. Second, to prove that if a
rule satisfies the combination of axioms, then the rule belongs to the
family. As \textit{additivity} is always one of the axioms, the proof of
this part is made in two steps. In the first step, we decompose a general
problem in simple problems. We then characterize the rules satisfying the
combination of our axioms for simple problems. In the second one, we extend
the rules from simple problems to general problems using additivity. This
two-step procedure is quite standard in the literature of resource
allocation. The difference lies in the simple problems considered and the
axioms used in their characterizations.

Theorem 1 in Berganti\~{n}os and Moreno-Ternero (2022a) states that a rule
satisfies \textit{anonymity} and \textit{additivity} if and only if it is a 
\textit{general rule}. Thus, it only remains to show that a \textit{general
rule} satisfies \textit{null team} if and only if it is a \textit{%
generalized split rule}, and that \textit{concede-and divide} is the only 
\textit{general rule} that satisfies \textit{essential team}.

Let $R$ be a general rule, i.e., $R=G^{xyz}$ for some $x,y,z\in R$ with $%
x+y+nz=1$. For each pair $i,j\in N$, with $i\neq j$, let $\boldsymbol{1}%
^{ij}\in \mathbf{\mathcal{P}}$ denote the matrix with the following entries: 
\begin{equation*}
\boldsymbol{1}_{kl}^{ij}=\left\{ 
\begin{tabular}{cc}
$1$ & if $\left( k,l\right) =\left( i,j\right) $ \\ 
0 & otherwise.%
\end{tabular}%
\right.
\end{equation*}

The proof of Theorem 1 in Berganti\~{n}os and Moreno-Ternero (2022a) shows
that 
\begin{eqnarray}
z &=&R_{k}\left( \boldsymbol{1}^{ij}\right) \text{ with }k\in N\backslash
\left\{ i,j\right\}  \label{formula x y z} \\
x &=&R_{i}\left( \boldsymbol{1}^{ij}\right) -z\text{ and }  \notag \\
y &=&R_{j}\left( \boldsymbol{1}^{ij}\right) -z.  \notag
\end{eqnarray}

Besides, $z,$ $x$, and $y$ do not depend on $i,j,k.$

% It is straightforward to prove that any generalized split rule satisfies the three properties. Conversely, 
%Let $R$ be a rule satisfying the three axioms. Let $i\in N.$
%By \textit{additivity}, 
%\begin{equation}
%R_{i}(A)=\sum_{j,k\in N:j\neq k}a_{jk}R_{i}\left( 1^{jk}\right) .
%\label{decom}
%\end{equation}

Now, by \textit{null team}, $z=0$ and hence $x=1-y.$ Then, 
\begin{equation*}
R_{i}(A)=G^{\left( 1-y\right) y0}=\left( 1-y\right) \sum_{j\in N\backslash
\left\{ i\right\} }a_{ij}+y\sum_{j\in N\backslash \left\{ i\right\}
}a_{ji}=S_{i}^{y}\left( A\right) .
\end{equation*}

Hence, $R$ is a \textit{generalized split rule}, which concludes the proof
of statement $\left( 1\right)$.

%$\left( 2\right) $ Berganti\~{n}os and Moreno-Ternero (2020a) proved that $CD $ satisfies \textit{additivity} and \textit{essential team}. It is obvious that $CD$ also satisfies \textit{anonymity}.

%Conversely, let $R$ be a rule satisfying the three axioms. Berganti\~{n}os and Moreno-Ternero (2022c) prove that a rule satisfies \textit{anonymity} and additivity if and only if it is a general rule. Thus, $R$ is a general rule and hence, $R=G^{xyz}$ for some $x,y,z\in R$ with $x+y+nz=1.$
%Let $i,j\in N$. Let $x,$ $y$, and $z$ be defined as in $\left( \ref{formulax y z}\right) $. 
Now, by \textit{essential team}, $R_{i}\left( \boldsymbol{1}^{ij}\right)
=R_{j}\left( \boldsymbol{1}^{ij}\right) =1.$ Hence, $z=\frac{-1}{n-2}.$
Then, by \textit{additivity}, $R\left( A\right) =CD\left( A\right) $, which
concludes the proof of statement $\left( 2\right) $.
\end{proof}

\bigskip

%\bigskip

\begin{remark}
\label{indep gen MS} The axioms used in Theorem \ref{char gen MS} are
independent.

$\left( 1\right) $ The \textit{uniform rule} satisfies $AD$ and $AN$ but
fails $NT.$

Let $R^{1}$ be the rule in which the audience of each game goes to the team
with the lowest number. Namely, for each $A\in \mathcal{P}$, and each $i\in
N,$ 
\begin{equation*}
R_{i}^{1}(N,A)=\sum_{j\in N:j>i}(a_{ij}+a_{ji}).
\end{equation*}

Then, $R^{1}$ satisfies $NT$ and $AD$ but fails $AN.$

Let $R^{2}$ be the rule in which the audience of each game is divided among
the teams playing the game proportionally to their audiences in the games
played against the other teams. Namely, for each $A\in \mathcal{P}$, and
each $i\in N,$%
\begin{equation*}
R_{i}^{2}(N,A)=\sum_{j\in N\backslash \left\{ i\right\} }\frac{%
\sum\limits_{k\in N\backslash \left\{ i,j\right\} }\left(
a_{ik}+a_{ki}\right) }{\sum\limits_{k\in N\backslash \left\{ i,j\right\}
}\left( a_{ik}+a_{ki}\right) +\sum\limits_{k\in N\backslash \left\{
i,j\right\} }\left( a_{jk}+a_{kj}\right) }\left[ a_{ij}+a_{ji}\right] .
\end{equation*}

Then, $R^{2}$ satisfies $AN$ and $NT$ but fails $AD.$ \bigskip

$\left( 2\right) $ The \textit{uniform rule} satisfies $AD$ and $AN$ but
fails $ET.$

Let $R^{3}$ be such that, for each pair $i,j\in N$, with $i\neq j$, each $%
\boldsymbol{1}^{ij}\in \mathbf{\mathcal{P}}$ and each $k\in N,$ 
\begin{equation*}
R_{k}^{3}\left( N,\boldsymbol{1}^{ij}\right) =\left\{ 
\begin{tabular}{ll}
$1$ & if $k\in \left\{ i,j\right\} $ \\ 
$-1$ & if $k=\min \left\{ l:l\in N\backslash \left\{ i,j\right\} \right\} $
\\ 
$0$ & otherwise%
\end{tabular}%
\right.
\end{equation*}%
We extend $R^{3}$ to all problems using \textit{additivity}. Namely, $%
R^{3}\left( A\right) =\sum\limits_{i,j\in N:i\neq j}a_{ij}R^{3}\left( 
\boldsymbol{1}^{ij}\right) .$

Then, $R^{3}$ satisfies $AD$ and $ET$ but fails $AN.$

%We define $R^{4}$ as follows. 
Let $\mathbf{\mathcal{P}}^{\prime }$ be the set of problems having at least
one \textit{essential team}. Let $R^{4}$ be such that it coincides with 
\textit{concede-and-divide} on $\mathbf{\mathcal{P}}^{\prime }$ and with the 
\textit{uniform rule} on $\mathbf{\mathcal{P}}\backslash \mathbf{\mathcal{P}}%
^{\prime }.$

Then, $R^{4}$ satisfies $AN$ and $ET$ but fails $AD.$
\end{remark}

%\bigskip

In many resource allocation problems, some relevant rules are also
characterized with the help of an impartiality axiom. Usually, if we replace
the impartiality axiom by \textit{anonymity} in the characterizations, the
result still holds.\footnote{%
For instance, the Shapley value is characterized by the combination of
efficiency, null player, additivity and either symmetry or anonymity.} We
stress that replacing \textit{equal treatment of equals} by \textit{anonymity%
} in Theorem 1 at Berganti\~{n}os and Moreno-Ternero (2020a) has different
impacts. With \textit{null team}, instead of characterizing the \textit{%
equal-split rule}, we characterize a whole family of rules containing it.
With \textit{essential team}, we simply obtain an alternative
characterization of \textit{concede-and-divide}. We also note that Theorem %
\ref{char gen MS} is the counterpart of Theorem 3.1 at Berganti\~{n}os and
Moreno-Ternero (2022a), replacing \textit{weak equal treatment of equals} by 
\textit{anonymity}. Finally, if we add one of the bound axioms introduced
above to the first statement of Theorem \ref{char gen MS}, we obtain a
characterization of \textit{split rules} instead of \textit{generalized
split rules}. %\bigskip

\begin{corollary}
\label{char split rules}A rule satisfies \textit{additivity}, \textit{%
anonymity}, \textit{null team} and either \textit{\textit{maximum aspirations%
}}, \textit{weak upper bound} or \textit{non-negativity} if and only if it
is a \textit{split rule}.
\end{corollary}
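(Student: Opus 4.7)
The plan is to leverage Theorem \ref{char gen MS}(1) directly: the combination of \textit{additivity}, \textit{anonymity} and \textit{null team} already pins the rule down to a \textit{generalized split rule} $S^{\lambda}$ for some $\lambda \in \mathbb{R}$. What the corollary adds is that any one of the three bound axioms forces $\lambda \in [0,1]$. So the proof has two directions, and the substantive work is all in a single exhibit matrix.

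For the \emph{if} direction, I would verify that every split rule $S^{\lambda}$ with $\lambda \in [0,1]$ satisfies all four axioms. Additivity and anonymity follow from $S^{\lambda}$ being a general rule (Theorem 1 of Berganti\~{n}os and Moreno-Ternero, 2022a); null team is immediate because $S_{i}^{\lambda}(A)$ depends only on audiences of games involving $i$. For the bounds, note that for $\lambda \in [0,1]$,
\begin{equation*}
0 \leq S_{i}^{\lambda}(A) = (1-\lambda)\sum_{j \neq i} a_{ij} + \lambda \sum_{j \neq i} a_{ji} \leq \sum_{j \neq i}(a_{ij}+a_{ji}) = \alpha_{i}(A) \leq ||A||,
\end{equation*}
so \textit{non-negativity}, \textit{maximum aspirations}, and \textit{weak upper bound} all hold.

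For the \emph{only if} direction, suppose $R$ satisfies \textit{additivity}, \textit{anonymity} and \textit{null team}. By Theorem \ref{char gen MS}(1), $R = S^{\lambda}$ for some $\lambda \in \mathbb{R}$. It suffices to show that the further requirement of \textit{non-negativity}, \textit{maximum aspirations}, or \textit{weak upper bound} rules out $\lambda < 0$ and $\lambda > 1$. For this I would take the elementary matrix $\boldsymbol{1}^{ij}$ (from the proof of Theorem \ref{char gen MS}) with a single unit entry at position $(i,j)$, for some fixed pair $i \neq j$. A direct computation gives $S_{i}^{\lambda}(\boldsymbol{1}^{ij}) = 1-\lambda$ and $S_{j}^{\lambda}(\boldsymbol{1}^{ij}) = \lambda$, while $\alpha_{i}(\boldsymbol{1}^{ij}) = \alpha_{j}(\boldsymbol{1}^{ij}) = ||\boldsymbol{1}^{ij}|| = 1$. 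Thus $\lambda < 0$ produces $S_{j}^{\lambda}(\boldsymbol{1}^{ij}) < 0$ (violating \textit{NN}) and $S_{i}^{\lambda}(\boldsymbol{1}^{ij}) > 1 = \alpha_{i} = ||\boldsymbol{1}^{ij}||$ (violating both \textit{MA} and \textit{WUB}); symmetrically, $\lambda > 1$ produces $S_{i}^{\lambda}(\boldsymbol{1}^{ij}) < 0$ and $S_{j}^{\lambda}(\boldsymbol{1}^{ij}) > 1$, violating the same three axioms. Either way, each of \textit{MA}, \textit{WUB}, \textit{NN} forces $\lambda \in [0,1]$, and so $R$ is a split rule.

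There is no real obstacle here; the only mild subtlety is choosing a test problem that simultaneously defeats all three bound axioms for any $\lambda \notin [0,1]$, which the single-entry matrix $\boldsymbol{1}^{ij}$ accomplishes cleanly.
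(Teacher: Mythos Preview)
Your proof is correct and follows essentially the same approach as the paper: both reduce to Theorem \ref{char gen MS}(1) and then evaluate the rule on the elementary matrix $\boldsymbol{1}^{ij}$ to force $\lambda\in[0,1]$ via each bound axiom. The paper phrases the computation in the $x,y,z$ notation from the proof of Theorem \ref{char gen MS} rather than computing $S^{\lambda}(\boldsymbol{1}^{ij})$ directly, and it leaves the ``if'' direction implicit, but the argument is the same.
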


%\bigskip

\begin{proof}
Let $x,$ $y$, and $z$ be defined as in the proof of Theorem \ref{char gen MS}%
.

Let $i,j\in N$, with $i\neq j.$ By \textit{\textit{maximum aspirations}} or 
\textit{weak upper bound}, $x+z=R_{i}\left( \boldsymbol{1}^{ij}\right) \leq
1 $ and $y+z=R_{j}\left( \boldsymbol{1}^{ij}\right) \leq 1.$ As $z=0,$ we
deduce that $y\in \left[ 0,1\right] $ and $R=S^{y}.$

By \textit{non-negativity}, $x+z=R_{i}\left( \boldsymbol{1}^{ij}\right) \geq
0$ and $y+z=R_{j}\left( \boldsymbol{1}^{ij}\right) \geq 0.$ As $z=0,$ we
deduce that $y\in \left[ 0,1\right] $ and $R=S^{y}.$
\end{proof}

%\bigskip
%Corollary 1 at Berganti\~{n}os and Moreno-Ternero (2021b) just replaces \textit{anonymity} by weak \textit{equal treatment of equals} at the previous result.
%\bigskip

\section{Characterizations of new families of rules}

In this section we characterize new families of rules combining \textit{%
anonymity} and \textit{additivity} with the other axioms described above.
All the families have two parts. The first part is obtained by applying a
rule to the original problem. The second part is obtained by applying 
\textit{concede-and divide} to an associated problem where some audiences of
the original problem are nullified. %\bigskip

Formally, for each $A\in \mathcal{P}$ and each $i\in N$ let $A^{i0}$ be the
matrix obtained from $A$ by nullifying the audiences of all the games $i$
played home. Namely, 
\begin{equation*}
a_{jk}^{i0}=\left\{ 
\begin{tabular}{ll}
$0$ & if $j=i$ \\ 
$a_{jk}$ & otherwise.%
\end{tabular}%
\right.
\end{equation*}

Besides, let $A^{0i}$ the matrix obtained from $A$ by nullifying the
audiences of all the games $i$ played away. Namely, 
\begin{equation*}
a_{jk}^{0i}=\left\{ 
\begin{tabular}{ll}
$0$ & if $k=i$ \\ 
$a_{jk}$ & otherwise.%
\end{tabular}%
\right.
\end{equation*}

\bigskip

We say that $R$ is an \textbf{extended }$EC$\textbf{\ rule} if there exist $%
x^{\prime },y^{\prime }\in \left[ 0,1\right] $ with $x^{\prime }+y^{\prime
}\geq 1$ such that, for each $i\in N$, 
\begin{equation*}
R_{i}\left( A\right) =\left\{ 
\begin{tabular}{ll}
$EC_{i}^{\lambda }\left( A\right) -\left\vert x^{\prime }-y^{\prime
}\right\vert CD_{i}\left( A^{i0}\right) $ & if $x^{\prime }\geq y^{\prime }$
\\ 
$EC_{i}^{\lambda }\left( A\right) -\left\vert x^{\prime }-y^{\prime
}\right\vert CD_{i}\left( A^{0i}\right) $ & if $x^{\prime }\leq y^{\prime }$%
\end{tabular}%
\right.
\end{equation*}%
where $\lambda =2-2\max \left\{ x^{\prime },y^{\prime }\right\} \in \left[
0,1\right] .$

We denote by $EC^{x^{\prime }y^{\prime }}$ the \textit{extended $EC$ rule}
associated to $x^{\prime }$ and $y^{\prime }$ as above. Notice that \textit{$%
EC$ rules} are \textit{extended $EC$ rules} (just take $x^{\prime
}=y^{\prime }).$

\bigskip

In the next theorem, we characterize the \textit{extended $EC$ rules}.

%\bigskip

\begin{theorem}
\label{char AD+AN+MA}A rule satisfies \textit{additivity}, \textit{anonymity}
and maximum aspirations if and only if it is an extended $EC$ rule.
\end{theorem}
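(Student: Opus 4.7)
The plan is to reduce the theorem to a classification problem for general rules. By Theorem 1 in Bergantiños and Moreno-Ternero (2022a) (the same input used in the proof of Theorem \ref{char gen MS}), any rule satisfying \emph{additivity} and \emph{anonymity} is a general rule $G^{xyz}$ for some scalars with $x+y+nz=1$. So the task splits into two subproblems: first, pin down exactly which triples $(x,y,z)$ yield a rule satisfying \emph{maximum aspirations}; second, show that this set of general rules coincides with the family of \emph{extended $EC$ rules}.

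For the first subproblem I would derive the relevant inequalities by evaluating $G^{xyz}$ on the elementary matrices $\boldsymbol{1}^{ij}$ used in the proof of Theorem \ref{char gen MS}. For fixed $i$, \emph{maximum aspirations} applied to $\boldsymbol{1}^{ij}$, $\boldsymbol{1}^{ji}$, and $\boldsymbol{1}^{jk}$ with $j,k\in N\setminus\{i\}$, together with the claims of team $i$ on these three matrices, forces the three inequalities
\[
x+z\le 1,\qquad y+z\le 1,\qquad z\le 0.
\]
Sufficiency is then a one-line decomposition. Writing $H_i=\sum_{j\ne i}a_{ij}$, $V_i=\sum_{j\ne i}a_{ji}$ and the residual mass $W_i=\sum_{j,k\ne i}a_{jk}\ge 0$, I get $\lVert A\rVert = H_i+V_i+W_i$, and hence
\[
G^{xyz}_i(A)-\alpha_i(A) = (x+z-1)H_i+(y+z-1)V_i+z\,W_i,
\]
which is nonpositive as soon as the three inequalities hold and $H_i,V_i,W_i\ge 0$.

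For the second subproblem I would perform the change of variables $x'=x+z$, $y'=y+z$. Combined with $x+y+nz=1$, the three inequalities translate to $x',y'\le 1$ and $x'+y' = 1-(n-2)z\ge 1$; lower bounds $x',y'\ge 0$ then drop out of $x+y+nz=1$ together with the upper inequalities. So the MA-admissible parameter region matches the extended-$EC$ parameter region $\{(x',y')\in[0,1]^2:x'+y'\ge 1\}$ exactly. The last step is to confirm the formula for $EC^{x'y'}$. Assuming $x'\ge y'$ (the other case is handled by the built-in symmetry of the definition, swapping the roles of home and away), I would expand $EC_i^{\lambda}(A)$ with $\lambda=2-2x'$ and $CD_i(A^{i0})$ in the coordinates $H_i$, $V_i$, $\lVert A\rVert$ using that $A^{i0}$ has total audience $\lVert A\rVert - H_i$ and team $i$-claim equal to $V_i$, and then check that the coefficients of $H_i$, $V_i$ and $\lVert A\rVert$ recover precisely $x$, $y$, and $z$ when $x+z=x'$ and $y+z=y'$.

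The main obstacle is this last algebraic matching. The extended $EC$ rule mixes $\lambda=2-2\max\{x',y'\}$ with a concede-and-divide correction on a matrix obtained by zeroing out one row or one column of $A$, so the formula is not manifestly linear in the triple $(H_i,V_i,\lVert A\rVert)$. Once the bookkeeping is carried out in the single case $x'\ge y'$, the symmetry of the construction yields the other case at no additional cost, and the two directions of the biconditional both follow from the identifications $x+z=x'$, $y+z=y'$ together with $x+y+nz=1$.
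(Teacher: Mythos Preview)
Your proposal is correct and follows essentially the same route as the paper: reduce to general rules via additivity and anonymity, extract the parameter constraints from \emph{maximum aspirations} on the elementary matrices $\boldsymbol{1}^{ij}$, and then match the resulting region with the extended $EC$ family through the change of variables $x'=x+z$, $y'=y+z$. Your one-line sufficiency check $G^{xyz}_i(A)-\alpha_i(A)=(x+z-1)H_i+(y+z-1)V_i+zW_i$ is a mild streamlining of the paper's argument (which verifies MA on each $\boldsymbol{1}^{ij}$ and then appeals to additivity), and your explicit bijection between $(x,y,z)$ and $(x',y')$ makes the parameter matching slightly more transparent, but the substance of the two proofs is the same.
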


\begin{proof}
We have argued above that $ES$ is a \textit{generalized split rule}. Thus,
by Theorem \ref{char gen MS}, both $ES$ and $CD$ satisfy \textit{additivity}
and \textit{anonymity}. Therefore, so do all \textit{extended $EC$ rules}.

Let $EC^{x^{\prime }y^{\prime }}$ be an \textit{extended $EC$ rule}. We
first show that $EC^{x^{\prime }y^{\prime }}$ satisfies \textit{\textit{%
maximum aspirations}} for each matrix $\boldsymbol{1}^{ij}\in \mathbf{%
\mathcal{P}}$ with $i,j\in N$ such that $i\neq j.$ 
% (as \textit{additivity} would .

%As $EC^{x^{\prime }y^{\prime }}$ satisfies \textit{additivity}, if $EC^{x^{\prime}y^{\prime }}$ satisfies \textit{maximum aspirations} for each matrix $\boldsymbol{1}^{ij},$ then $EC^{x^{\prime }y^{\prime }}$ satisfies \textit{maximum aspirations} for each $A.$ 
We consider several cases.

\begin{enumerate}
\item $x^{\prime }\geq y^{\prime }$. 
\begin{eqnarray*}
EC_{i}^{x^{\prime }y^{\prime }}\left( \boldsymbol{1}^{ij}\right) &=&\left(
2-2x^{\prime }\right) ES_{i}\left( \boldsymbol{1}^{ij}\right) +\left(
2x^{\prime }-1\right) CD_{i}\left( \boldsymbol{1}^{ij}\right) -\left\vert
x^{\prime }-y^{\prime }\right\vert CD_{i}\left( \left( \boldsymbol{1}%
^{ij}\right) ^{i0}\right) \\
&=&\left( 2-2x^{\prime }\right) \frac{1}{2}+\left( 2x^{\prime }-1\right)
=x^{\prime }\leq 1.
\end{eqnarray*}

Similarly, $EC_{j}^{x^{\prime }y^{\prime }}\left( \boldsymbol{1}^{ij}\right)
=x^{\prime }\leq 1.$

For each $k\in N\backslash \left\{ i,j\right\}$, 
\begin{equation*}
EC_{k}^{x^{\prime }y^{\prime }}\left( \boldsymbol{1}^{ij}\right) =\left(
2x^{\prime }-1\right) \frac{-1}{n-2}=\frac{1-2x^{\prime }}{n-2}\leq 0.
\end{equation*}

\item $x^{\prime }\leq y^{\prime }.$ 
\begin{equation*}
EC_{i}^{x^{\prime }y^{\prime }}\left( \boldsymbol{1}^{ij}\right) =\left(
2-2y^{\prime }\right) \frac{1}{2}+\left( 2y^{\prime }-1\right) -\left(
y^{\prime }-x^{\prime }\right) =x^{\prime }\leq 1.
\end{equation*}

Similarly, $EC_{j}^{x^{\prime }y^{\prime }}\left( \boldsymbol{1}^{ij}\right)
=x^{\prime }\leq 1.$

For each $k\in N\backslash \left\{ i,j\right\}$, 
\begin{equation*}
EC_{k}^{x^{\prime }y^{\prime }}\left( \boldsymbol{1}^{ij}\right) =\left(
2y^{\prime }-1\right) \frac{-1}{n-2}-\left( y^{\prime }-x^{\prime }\right) 
\frac{-1}{n-2}=\frac{1-x^{\prime }-y^{\prime }}{n-2}\leq 0.
\end{equation*}
\end{enumerate}

Then, $EC^{x^{\prime }y^{\prime }}$ satisfies \textit{maximum aspirations}
for $\boldsymbol{1}^{ij}\in \mathbf{\mathcal{P}}$. We now prove it in the
general case. Let $A\in \mathcal{P}$ and $i\in N.$ As $EC^{x^{\prime
}y^{\prime }}$ satisfies \textit{additivity}, 
\begin{eqnarray*}
EC_{i}^{x^{\prime }y^{\prime }}\left( A\right) &=&\sum_{j,k\in
N}a_{jk}EC_{i}^{x^{\prime }y^{\prime }}\left( \boldsymbol{1}^{jk}\right) \\
&=&\sum_{j\in N\backslash \left\{ i\right\} }\left( a_{ij}EC_{i}^{x^{\prime
}y^{\prime }}\left( \boldsymbol{1}^{ij}\right) +a_{ji}EC_{i}^{x^{\prime
}y^{\prime }}\left( \boldsymbol{1}^{ji}\right) \right) +\sum_{j,k\in
N\backslash \left\{ i\right\} }a_{jk}EC_{i}^{x^{\prime }y^{\prime }}\left( 
\boldsymbol{1}^{jk}\right) \\
&\leq &\sum_{j\in N\backslash \left\{ i\right\} }\left( a_{ij}+a_{ji}\right)
+\sum_{j,k\in N\backslash \left\{ i\right\} }0 \\
&=&\alpha _{i}\left( A\right) .
\end{eqnarray*}
%\bigskip

Conversely, let $R$ be a rule satisfying the axioms from the statement. Let $%
A\in \mathcal{P}$. Given $i,j\in N,$ with $i\neq j$, let $x^{\prime
}=R_{i}\left( \boldsymbol{1}^{ij}\right) $ and $y^{\prime }=R_{j}\left( 
\boldsymbol{1}^{ij}\right) $ and $z=R_{k}\left( \boldsymbol{1}^{ij}\right) $
for $k\in N\backslash \left\{ i,j\right\} .$

By \textit{maximum aspirations}, $x^{\prime }\leq \alpha_{i}\left( 
\boldsymbol{1}^{ij}\right)= 1,$ $y^{\prime }\leq \alpha_{j}\left( 
\boldsymbol{1}^{ij}\right)=1$ and $z\leq \alpha_{k}\left( \boldsymbol{1}%
^{ij}\right)=0.$ As $z=\frac{1-x^{\prime }-y^{\prime }}{n-2}$, we deduce
that $x^{\prime }+y^{\prime }\geq 1.$ Then, $x^{\prime }\geq 0$ and $%
y^{\prime }\geq 0.$

Let $i\in N.$ By \textit{additivity}, 
\begin{eqnarray*}
R_{i}(A) &=&\sum_{j\in N}a_{ij}R_{i}\left( \boldsymbol{1}^{ij}\right)
+\sum_{j\in N}a_{ji}R_{i}\left( 1^{ji}\right) +\sum_{j,k\in N\backslash
\left\{ i\right\} }a_{jk}R_{i}\left( 1^{jk}\right) \\
&=&x^{\prime }\sum_{j\in N}a_{ij}+y^{\prime }\sum_{j\in N}a_{ji}+\frac{%
1-x^{\prime }-y^{\prime }}{n-2}\sum_{j,k\in N\backslash \left\{ i\right\}
}a_{jk}.
\end{eqnarray*}

We consider two cases.

Case 1. $\max \left\{ x^{\prime },y^{\prime }\right\} =x^{\prime }.$

Then, 
\begin{equation*}
R_{i}(A)=x^{\prime }\alpha _{i}\left( A\right) +\frac{1-2x^{\prime }}{n-2}%
\left( \left\vert \left\vert A\right\vert \right\vert -\alpha _{i}\left(
A\right) \right) +\left( y^{\prime }-x^{\prime }\right) \sum_{j\in N}a_{ji}-%
\frac{y^{\prime }-x^{\prime }}{n-2}\left( \left\vert \left\vert A\right\vert
\right\vert -\alpha _{i}\left( A\right) \right) .
\end{equation*}

Thus, 
\begin{eqnarray*}
x^{\prime }\alpha _{i}\left( A\right) +\frac{1-2x^{\prime }}{n-2}\left(
\left\vert \left\vert A\right\vert \right\vert -\alpha _{i}\left( A\right)
\right) &=&x^{\prime }\alpha _{i}\left( A\right) +\left( 2x^{\prime
}-1\right) \frac{\left( n-1\right) \alpha _{i}\left( A\right) -\left\vert
\left\vert A\right\vert \right\vert }{n-2}-\left( 2x^{\prime }-1\right)
\alpha _{i}\left( A\right) \\
&=&\left( 2-2x^{\prime }\right) ES_{i}\left( A\right) +\left( 2x^{\prime
}-1\right) CD_{i}\left( A\right) .
\end{eqnarray*}

As $x^{\prime },y^{\prime }\in \left[ 0,1\right] ,$ $x^{\prime }+y^{\prime
}\geq 1$ and $x^{\prime }\geq y^{\prime }$ we have that $x^{\prime }\in %
\left[ \frac{1}{2},1\right] .$ Let $\lambda =2-2x^{\prime }.$ Then $\lambda
\in \left[ 0,1\right] $.

Besides, 
\begin{eqnarray*}
\left( y^{\prime }-x^{\prime }\right) \sum_{j\in N}a_{ji}-\frac{y^{\prime
}-x^{\prime }}{n-2}\left( \left\vert \left\vert A\right\vert \right\vert
-\alpha _{i}\left( A\right) \right) &=&\left( y^{\prime }-x^{\prime }\right) 
\frac{\left( n-2\right) \sum\limits_{j\in N}a_{ji}+\alpha _{i}\left(
A\right) -\left\vert \left\vert A\right\vert \right\vert }{n-2} \\
&=&\left( y^{\prime }-x^{\prime }\right) CD_{i}\left( A^{i0}\right) .
\end{eqnarray*}

Now, 
\begin{equation*}
R_{i}(A)=EC^{2-2\max \left\{ x^{\prime },y^{\prime }\right\} }\left(
A\right) -\left\vert x^{\prime }-y^{\prime }\right\vert CD_{i}\left(
A^{i0}\right).
\end{equation*}

Case 2. $\max \left\{ x^{\prime },y^{\prime }\right\} =y^{\prime }.$

Then, 
\begin{equation*}
R_{i}(A)=y^{\prime }\alpha _{i}\left( A\right) +\frac{1-2y^{\prime }}{n-2}%
\left( \left\vert \left\vert A\right\vert \right\vert -\alpha _{i}\left(
A\right) \right) +\left( x^{\prime }-y^{\prime }\right) \sum_{j\in N}a_{ij}-%
\frac{x^{\prime }-y^{\prime }}{n-2}\left( \left\vert \left\vert A\right\vert
\right\vert -\alpha _{i}\left( A\right) \right) .
\end{equation*}

Similarly to Case 1, we deduce that 
\begin{equation*}
y^{\prime }\alpha _{i}\left( A\right) +\frac{1-2y^{\prime }}{n-2}\left(
\left\vert \left\vert A\right\vert \right\vert -\alpha _{i}\left( A\right)
\right) =EC^{2-2\max \left\{ x^{\prime },y^{\prime }\right\} }\left(
A\right) .
\end{equation*}

Besides, 
\begin{eqnarray*}
\left( x^{\prime }-y^{\prime }\right) \sum_{j\in N}a_{ij}-\frac{x^{\prime
}-y^{\prime }}{n-2}\left( \left\vert \left\vert A\right\vert \right\vert
-\alpha _{i}\left( A\right) \right) &=&\left( x^{\prime }-y^{\prime }\right) 
\frac{\left( n-2\right) \sum\limits_{j\in N}a_{ij}+\alpha _{i}\left(
A\right) -\left\vert \left\vert A\right\vert \right\vert }{n-2} \\
&=&\left( x^{\prime }-y^{\prime }\right) CD_{i}\left( A^{0i}\right) .
\end{eqnarray*}

Thus, 
\begin{equation*}
R_{i}(A)=EC^{2-2\max \left\{ x^{\prime },y^{\prime }\right\} }\left(
A\right) -\left\vert x^{\prime }-y^{\prime }\right\vert CD_{i}\left(
A^{0i}\right) .
\end{equation*}
\end{proof}

\bigskip

Theorem \ref{char AD+AN+MA} highlights the central role of \textit{%
concede-and-divide} as a rule to solve broadcasting problems. Extended $EC$
rules have two parts: one depending on the family of $EC$ rules $\left(
EC_{i}^{\lambda }\left( A\right) \right) $ and another depending on \textit{%
concede-and-divide} ($CD_{i}\left( A^{i0}\right) $ and $CD_{i}\left(
A^{0i}\right) ).$

%\bigskip 

\begin{remark}
\label{indep AD+AN+MA} The axioms used in Theorem \ref{char AD+AN+MA} are
independent.

The \textit{uniform rule} satisfies $AD$ and $AN$ but fails $MA.$

$R^{1}$, defined as in Remark \ref{indep gen MS}, satisfies $AD$ and $MA$
but fails $AN.$

%We define $R^{5}$ as follows. 
For each $A\in \mathcal{P}$, let 
\begin{equation*}
R^{5}\left( A\right) =\left\{ 
\begin{tabular}{ll}
$ES\left( A\right) $ & when $\left\vert \left\vert A\right\vert \right\vert
\leq 10$ \\ 
$CD\left( A\right) $ & otherwise.%
\end{tabular}%
\right.
\end{equation*}

Then, $R^{5}$ satisfies $AN$ and $MA$ but fails $AD.$
\end{remark}

%\bigskip 

The $EC$ rules are characterized by the combination of \textit{additivity}, 
\textit{maximum aspirations} and either \textit{symmetry} or \textit{equal
treatment of equals} (e.g., Berganti\~{n}os and Moreno-Ternero, 2021,
2022a). Theorem \ref{char AD+AN+MA} shows that if we replace \textit{symmetry%
} or \textit{equal treatment of equals} by \textit{anonymity}, more rules
arise.

\bigskip

The next corollary states the effect of adding the three \textit{order
preservation} axioms to the statement of Theorem \ref{char AD+AN+MA}.

\begin{corollary}
\label{char AD+AN+WOP+MA} The following statements hold:

$\left( 1\right) $ A rule satisfies \textit{additivity}, \textit{anonymity},
maximum aspirations and home order preservation if and only if it is an
extended $EC$ rule with $x^{\prime }\geq y^{\prime }.$

$\left( 2\right) $ A rule satisfies \textit{additivity}, \textit{anonymity},
maximum aspirations and away order preservation if and only if it is an
extended $EC$ rule with $x^{\prime }\leq y^{\prime }.$

$\left( 3\right) $ A rule satisfies \textit{additivity}, \textit{anonymity},
maximum aspirations and order preservation if and only if it is an $EC$ rule.
\end{corollary}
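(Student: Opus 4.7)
The plan is to reduce the corollary to a calculation on the closed-form expression of an extended $EC$ rule. By Theorem \ref{char AD+AN+MA}, any rule $R$ satisfying \textit{additivity}, \textit{anonymity}, and \textit{maximum aspirations} is an extended $EC$ rule $EC^{x'y'}$, so in each of the three statements I only need to translate the extra axiom into a constraint on the pair $(x',y')$.

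My first step will be to derive a compact expression for the pairwise difference $R_i(A)-R_j(A)$. Using the formula obtained in the proof of Theorem \ref{char AD+AN+MA}, I would rewrite $R_i(A)=x' h_i + y' w_i + z(||A||-h_i-w_i)$ with $h_i=\sum_j a_{ij}$, $w_i=\sum_j a_{ji}$, and $z=(1-x'-y')/(n-2)\leq 0$. Splitting $h_i-h_j$ and $w_i-w_j$ into the part involving the games between $i$ and $j$ and the parts involving third teams, I arrive at the key identity
\begin{equation*}
R_i(A)-R_j(A)=(x'-y')(a_{ij}-a_{ji})+\mu\sum_{k\in N\backslash\{i,j\}}(a_{ik}-a_{jk})+\nu\sum_{k\in N\backslash\{i,j\}}(a_{ki}-a_{kj}),
\end{equation*}
with $\mu=x'-z$ and $\nu=y'-z$, both non-negative because $x',y'\geq 0$ and $z\leq 0$. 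This identity is the engine that drives all three statements.

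For the forward implications, I simply read the signs. Under \textit{home order preservation} the three differences $a_{ij}-a_{ji}$, $a_{ik}-a_{jk}$, and $a_{ki}-a_{kj}$ are all non-negative, so $R_i(A)\geq R_j(A)$ as soon as $x'\geq y'$. Under \textit{away order preservation} the first difference reverses sign, and the analogous inequality holds provided $x'\leq y'$. Under \textit{order preservation}, Proposition \ref{rel axioms}(4) forces both inequalities, so $x'=y'$ and the rule reduces to the $EC$ rule $EC^{2-2x'}$ with $\lambda=2-2x'\in[0,1]$. For the converse implications, I plug $A=\boldsymbol{1}^{ij}$ into the identity so that the sums over $k\in N\backslash\{i,j\}$ vanish and only the first term survives: \textit{HOP} applied to teams $i$ and $j$ yields $x'-y'\geq 0$; \textit{AOP} applied to $\boldsymbol{1}^{ji}$ yields (since on that matrix $a_{ij}-a_{ji}=-1$) that $-(x'-y')\geq 0$, i.e., $x'\leq y'$; and \textit{OP} gives both.

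The main obstacle I anticipate is nailing down the key identity cleanly, in particular tracking the contribution of the $(i,j)$ and $(j,i)$ entries, which partly cancel in the aggregate sums but ultimately produce the decisive $(x'-y')(a_{ij}-a_{ji})$ term. Once that identity is in place, the three statements reduce to sign inspections, with no further substantive computation required.
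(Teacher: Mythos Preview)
Your proposal is correct. The key identity you derive for $R_i(A)-R_j(A)$ is clean and does all the work; the only expository wrinkle is that your sentence about \textit{order preservation} in the ``forward implications'' paragraph is really the converse argument, but since both directions of (3) follow immediately from parts (1) and (2) together with Proposition~\ref{rel axioms}(4), nothing is missing.

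Your route differs from the paper's in two respects. First, for the direction showing that an extended $EC$ rule with $x'\geq y'$ satisfies \textit{home order preservation}, the paper does not compute the difference directly; instead it forward-references Theorem~\ref{char AD+AN+WOP+WUB} (the characterization of extended $UC$ rules), observing that extended $EC$ rules sit inside that family. Your self-contained difference identity avoids this forward reference entirely. Second, for statement (3) the paper argues the converse via $OP\Rightarrow ETE$ (Proposition~\ref{rel axioms}(2)) applied to $\boldsymbol{1}^{ij}$, whereas you use $OP\Leftrightarrow HOP\wedge AOP$ (Proposition~\ref{rel axioms}(4)) to reduce to parts (1) and (2). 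Both are one-line deductions; your version has the advantage of making (3) a formal consequence of (1) and (2), while the paper's version is independent of them. Overall your argument is slightly more economical and does not depend on a result stated later in the paper.
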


\begin{proof}
$\left( 1\right) $ Let $EC^{x^{\prime }y^{\prime }}$ be an extended $EC$
rule with $x^{\prime }\geq y^{\prime }.$ By Theorem \ref{char AD+AN+MA}, $%
EC^{x^{\prime }y^{\prime }}$ satisfies \textit{additivity}, \textit{anonymity%
} and \textit{maximum aspirations}. As $x^{\prime }\geq y^{\prime }$, and $%
\left\{EC^{\lambda }:\lambda \in \left[ 0,1\right] \right\} \subset
\left\{UC^{\lambda }:\lambda \in \left[ 0,1\right] \right\} $, we deduce
from Theorem \ref{char AD+AN+WOP+WUB} below that $EC^{x^{\prime }y^{\prime
}} $ satisfies home order preservation.

Let $R$ be a rule satisfying all the axioms from the statement. By the proof
of Theorem \ref{char AD+AN+MA}, $R=EC^{x^{\prime }y^{\prime }}$ where $%
x^{\prime }=R_{i}\left( \boldsymbol{1}^{ij}\right) $ and $y^{\prime
}=R_{j}\left( \boldsymbol{1}^{ij}\right) $ for each pair $i,j\in N$ with $%
i\neq j.$ By \textit{home order preservation} we deduce that $x^{\prime
}\geq y^{\prime }.$

$\left( 2\right) $ It is similar to $\left( 1\right) .$

$\left( 3\right) $ We have argued that each $EC$ rule satisfies \textit{%
additivity}, \textit{anonymity} and \textit{maximum aspirations}. It is
obvious that they also satisfy \textit{order preservation}.

Let $R$ be a rule satisfying all the axioms from the statement. By the proof
of Theorem \ref{char AD+AN+MA}, $R=EC^{x^{\prime }y^{\prime }}$ where $%
x^{\prime }=R_{i}\left( \boldsymbol{1}^{ij}\right) $ and $y^{\prime
}=R_{j}\left( \boldsymbol{1}^{ij}\right) $ for each pair $i,j\in N$ with $%
i\neq j$. By Proposition \ref{rel axioms}, \textit{order preservation}
implies \textit{equal treatment of equals}. Thus, $x^{\prime }=y^{\prime }$
and hence $R$ is an \textit{$EC$ rule}.
\end{proof}

\bigskip

We say that $R$ is an \textbf{extended }$UC$\textbf{\ rule} if there exist $%
x^{\prime },y^{\prime }\in \mathbb{R}$ such that, for each $i\in N$, 
\begin{equation*}
R_{i}\left( A\right) =\left\{ 
\begin{tabular}{ll}
$UC_{i}^{\lambda }\left( A\right) -\left\vert x^{\prime }-y^{\prime
}\right\vert CD_{i}\left( A^{i0}\right) $ & if $x^{\prime }\geq y^{\prime }$
\\ 
$UC_{i}^{\lambda }\left( A\right) -\left\vert x^{\prime }-y^{\prime
}\right\vert CD_{i}\left( A^{0i}\right) $ & if $x^{\prime }\leq y^{\prime }$%
\end{tabular}%
\right.
\end{equation*}%
where $\max \left\{ x^{\prime },y^{\prime }\right\} \in \left[ \frac{1}{n},1%
\right] ,$ $\min \left\{ x^{\prime },y^{\prime }\right\} \in \left[ \frac{%
1-\max \left\{ x^{\prime },y^{\prime }\right\} }{n-1},\max \left\{ x^{\prime
},y^{\prime }\right\} \right] ,$ and $\lambda =\frac{n\left( 1-\max \left\{
x^{\prime },y^{\prime }\right\} \right) }{n-1}\in \left[ 0,1\right] $.

We denote by $UC^{x^{\prime }y^{\prime }}$ the rule associated to the
numbers $x^{\prime }$ and $y^{\prime }$, as above. Notice that $UC$ rules
are all \textit{extended $UC$ rules} (just take $x^{\prime }=y^{\prime }).$

\bigskip

In the next theorem we provide a characterization of \textit{extended $UC$
rules}.

%\bigskip

\begin{theorem}
\label{char AD+AN+WOP+WUB} The following statements hold:

$\left( 1\right) $ A rule satisfies \textit{additivity}, \textit{anonymity}, 
\textit{home order preservation} and \textit{weak upper bound} if and only
if it is an extended $UC$ rule with $x^{\prime }\geq y^{\prime }.$

$\left( 2\right) $ A rule satisfies \textit{additivity}, \textit{anonymity}, 
\textit{away order preservation} and \textit{weak upper bound} if and only
if it is an extended $UC$ rule with $x^{\prime }\leq y^{\prime }$.
\end{theorem}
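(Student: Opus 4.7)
The plan is to follow the template used for Theorem \ref{char AD+AN+MA}: a rule satisfying AD and AN is a general rule $G^{xyz}$ by Theorem 1 of Berganti\~{n}os and Moreno-Ternero (2022a), so the work lies in translating HOP and WUB into constraints on the parameters. I will prove statement (1) in detail; statement (2) follows by the obvious symmetry that exchanges home and away, replacing HOP by AOP and $A^{i0}$ by $A^{0i}$.

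For necessity, write $R = G^{xyz}$ and set $x' = R_i(\boldsymbol{1}^{ij}) = x+z$ and $y' = R_j(\boldsymbol{1}^{ij}) = y+z$, so that $z = (1-x'-y')/(n-2)$. WUB applied to $\boldsymbol{1}^{ij}$ gives $x', y' \leq 1$. HOP applied to $\boldsymbol{1}^{ij}$ yields three inequalities, one for each relevant pair of teams: $x' \geq y'$ from the pair $(i,j)$, $x' \geq z$ from a pair $(i,k)$ with $k \notin \{i,j\}$, and $y' \geq z$ from a pair $(j,k)$ with $k \notin \{i,j\}$. The last two rewrite, using the formula for $z$, as $(n-1)x'+y' \geq 1$ and $x'+(n-1)y' \geq 1$. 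The first of these gives $y' \geq (1-x')/(n-1)$, i.e., the required lower bound on $\min\{x',y'\}$; the second, combined with $y' \leq x'$, gives $nx' \geq 1$, i.e., the required lower bound $x' \geq 1/n$ on $\max\{x',y'\}$. Setting $\lambda = n(1-x')/(n-1) \in [0,1]$ makes $UC^{x'y'}$ a well-defined extended UC rule. Both $G^{xyz}$ and $UC^{x'y'}$ are additive and agree on each $\boldsymbol{1}^{ij}$ (assigning $x'$ to the home team, $y'$ to the away team, and $z$ to every other team), so by additivity they coincide on all of $\mathcal{P}$.

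For sufficiency, AD and AN of any $UC^{x'y'}$ are inherited from the corresponding properties of $UC^\lambda$ and $CD$ (both general rules by Theorem \ref{char gen MS}) together with the observation that $A \mapsto A^{i0}$ is linear and satisfies $CD_{\sigma(i)}((A^\sigma)^{\sigma(i)0}) = CD_i(A^{i0})$. I would verify the remaining two axioms from the explicit formula
$$ R_i(A) = \frac{1}{n-2}\bigl\{(1-x'-y')\|A\| + [x'+(n-1)y'-1]\alpha_i + (n-2)(x'-y')h_i\bigr\}, $$
where $h_i = \sum_{j\neq i} a_{ij}$. For WUB, the two parameter coefficients on $\alpha_i$ and $h_i$ are non-negative under the extended UC constraints, so combining the bounds $h_i \leq \alpha_i \leq \|A\|$ with the algebraic identity $(1-x'-y') + (n-1)x' + y' - 1 = (n-2)x'$ and the constraint $x' \leq 1$ yields $R_i(A) \leq \|A\|$. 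For HOP, the analogous computation gives
$$ R_i(A) - R_j(A) = \frac{1}{n-2}\bigl\{(h_i-h_j)[(n-1)x'+y'-1] + (a_i-a_j)[x'+(n-1)y'-1]\bigr\}, $$
which I would dispatch by setting $P = h_i-h_j \geq 0$ and $Q = a_i-a_j$ with $P + Q = \alpha_i - \alpha_j \geq 0$, and then invoking $(n-1)x'+y'-1 \geq x'+(n-1)y'-1 \geq 0$ (equivalent to $x' \geq y'$) to absorb a possibly negative $Q$ against the larger non-negative $P$.

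The main obstacle is HOP in the sufficiency direction: HOP is not preserved under addition of matrices, so it cannot be reduced to the building blocks $\boldsymbol{1}^{ij}$, and all three parameter constraints $x' \geq y'$, $(n-1)x'+y' \geq 1$, and $x'+(n-1)y' \geq 1$ must be used in concert to handle the critical case where $a_i - a_j$ is negative. Everything else is routine bookkeeping on top of the AD+AN reduction to general rules.
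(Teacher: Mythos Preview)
Your proof is correct and follows essentially the same strategy as the paper: both directions rest on the reduction (via Theorem~1 of Berganti\~{n}os and Moreno-Ternero, 2022a) to general rules $G^{xyz}$ and on reading off $x',y',z$ from the basis matrices $\boldsymbol{1}^{ij}$. For necessity your argument matches the paper's almost line by line; your only slip is that you swap the labels ``first'' and ``second'' when extracting the bounds---it is $x'+(n-1)y'\geq 1$ (coming from $y'\geq z$) that yields $y'\geq (1-x')/(n-1)$, not $(n-1)x'+y'\geq 1$. For sufficiency the paper computes $UC^{x'y'}$ on each $\boldsymbol{1}^{jk}$ and then expands by additivity to verify HOP term by term, whereas you pass through the closed form $R_i(A)=\frac{1}{n-2}\bigl[(1-x'-y')\lVert A\rVert+(x'+(n-1)y'-1)\alpha_i+(n-2)(x'-y')h_i\bigr]$ and argue directly with the inequalities $c_1\geq c_2\geq 0$; this is a cleaner bookkeeping but the content is the same. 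One point you assert without computation is that $UC^{x'y'}$ indeed assigns $x',y',z$ on each $\boldsymbol{1}^{ij}$ (so that it coincides with the explicit formula); the paper carries out this calculation explicitly in its equations (\ref{formula UC x})--(\ref{formula UC 1-x-y}), and you would need to do the same to close the loop.
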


\begin{proof}
$\left( 1\right) $ It is obvious that $U$ satisfies \textit{additivity} and 
\textit{anonymity}. By Theorem \ref{char AD+AN+MA}, $CD$ satisfies \textit{%
additivity} and \textit{anonymity}. As $U$ and $CD$ satisfy \textit{%
additivity} and \textit{anonymity}, so do all \textit{extended $UC$ rules}.

As for \textit{home order preservation}, we focus first on each matrix $%
\boldsymbol{1}^{jk}$ where $j,k\in N$ with $j\neq k$. Let $UC^{x^{\prime
}y^{\prime }}$ be an extended $UC$ rule with $x^{\prime }\geq y^{\prime }$.
We compute $UC_{i}^{x^{\prime }y^{\prime }}\left( \boldsymbol{1}^{jk}\right) 
$ for each $i\in N.$ We consider several cases.

\begin{enumerate}
\item $j=i.$ Then, 
\begin{eqnarray}
UC_{i}^{x^{\prime }y^{\prime }}\left( \mathbf{1}^{ik}\right) &=&\frac{%
n\left( 1-x^{\prime }\right) }{n-1}U_{i}\left( \mathbf{1}^{ik}\right) +\frac{%
nx^{\prime }-1}{n-1}CD_{i}\left( \mathbf{1}^{ik}\right) -\left( x^{\prime
}-y^{\prime }\right) CD_{i}\left( \left( \mathbf{1}^{ik}\right) ^{i0}\right)
\label{formula UC x} \\
&=&\frac{n\left( 1-x^{\prime }\right) }{n-1}\frac{1}{n}+\frac{nx^{\prime }-1%
}{n-1}-0=x^{\prime }.  \notag
\end{eqnarray}

\item $k=i.$ Then, 
\begin{equation}
UC_{i}^{x^{\prime }y^{\prime }}\left( \mathbf{1}^{ji}\right) =\frac{n\left(
1-x^{\prime }\right) }{n-1}\frac{1}{n}+\frac{nx^{\prime }-1}{n-1}-\left(
x^{\prime }-y^{\prime }\right) =y^{\prime }.  \label{formula UC y}
\end{equation}

\item $i\notin \left\{ j,k\right\} .$ Then, 
\begin{equation}
UC_{i}^{x^{\prime }y^{\prime }}\left( \mathbf{1}^{jk}\right) =\frac{n\left(
1-x^{\prime }\right) }{n-1}\frac{1}{n}+\frac{nx^{\prime }-1}{n-1}\frac{-1}{%
n-2}-\left( x^{\prime }-y^{\prime }\right) \frac{-1}{n-2}=1-x^{\prime
}-y^{\prime }.  \label{formula UC 1-x-y}
\end{equation}
\end{enumerate}

Let $R^{x^{\prime }y^{\prime }}$ be any extended $UC$ rule with $x^{\prime
}\geq y^{\prime }$, $A\in \mathcal{P},$ and $i$ and $j$ as in the definition
of \textit{home order preservation}. By \textit{additivity}, 
\begin{eqnarray*}
UC_{i}^{x^{\prime }y^{\prime }}\left( A\right) &=&a_{ij}UC_{i}^{x^{\prime
}y^{\prime }}\left( \boldsymbol{\mathbf{1}}^{ij}\right)
+a_{ji}UC_{i}^{x^{\prime }y^{\prime }}\left( \mathbf{1}^{ji}\right)
+\sum_{k\in N\backslash \left\{ i,j\right\} }a_{ik}UC_{i}^{x^{\prime
}y^{\prime }}\left( \mathbf{1}^{ik}\right) \\
&&+\sum_{k\in N\backslash \left\{ i,j\right\} }a_{ki}UC_{i}^{x^{\prime
}y^{\prime }}\left( \mathbf{1}^{ki}\right) +\sum_{k,l\in N\backslash \left\{
i,j\right\} }a_{kl}UC_{i}^{x^{\prime }y^{\prime }}\left( \mathbf{1}%
^{kl}\right) \text{ and } \\
UC_{j}^{x^{\prime }y^{\prime }}\left( A\right) &=&a_{ij}UC_{j}^{x^{\prime
}y^{\prime }}\left( \boldsymbol{\mathbf{1}}^{ij}\right)
+a_{ji}UC_{j}^{x^{\prime }y^{\prime }}\left( \mathbf{1}^{ji}\right)
+\sum_{k\in N\backslash \left\{ i,j\right\} }a_{jk}UC_{j}^{x^{\prime
}y^{\prime }}\left( \mathbf{1}^{jk}\right) \\
&&+\sum_{k\in N\backslash \left\{ i,j\right\} }a_{kj}UC_{j}^{x^{\prime
}y^{\prime }}\left( \mathbf{1}^{kj}\right) +\sum_{k,l\in N\backslash \left\{
i,j\right\} }a_{kl}UC_{j}^{x^{\prime }y^{\prime }}\left( \mathbf{1}%
^{kl}\right) .
\end{eqnarray*}

By $\left( \ref{formula UC x}\right) $ and $\left( \ref{formula UC y}\right) 
$, 
\begin{eqnarray*}
a_{ij}UC_{i}^{x^{\prime }y^{\prime }}\left( \boldsymbol{\mathbf{1}}%
^{ij}\right) +a_{ji}UC_{i}^{x^{\prime }y^{\prime }}\left( \mathbf{1}%
^{ji}\right) &\geq &a_{ij}UC_{j}^{x^{\prime }y^{\prime }}\left( \boldsymbol{%
\mathbf{1}}^{ij}\right) +a_{ji}UC_{j}^{x^{\prime }y^{\prime }}\left( \mathbf{%
1}^{ji}\right) \Leftrightarrow \\
a_{ij}x^{\prime }+a_{ji}y^{\prime } &\geq &a_{ij}y^{\prime }+a_{ji}x^{\prime
}\Leftrightarrow \\
x^{\prime }\left( a_{ij}-a_{ji}\right) &\geq &y^{\prime }\left(
a_{ij}-a_{ji}\right) ,
\end{eqnarray*}%
which holds because $x^{\prime }\geq y^{\prime }$ and $a_{ij}\geq a_{ji}.$

By $\left( \ref{formula UC x}\right) ,$ for each $k\in N\backslash \left\{
i,j\right\} ,$ $UC_{i}^{x^{\prime }y^{\prime }}\left( \mathbf{1}^{ik}\right)
=UC_{j}^{x^{\prime }y^{\prime }}\left( \mathbf{1}^{jk}\right) =x^{\prime }.$
As $a_{ik}\geq a_{jk}$, for each $k\in N\backslash \left\{ i,j\right\} $, 
\begin{equation*}
\sum_{k\in N\backslash \left\{ i,j\right\} }a_{ik}UC_{i}^{x^{\prime
}y^{\prime }}\left( \mathbf{1}^{ik}\right) \geq \sum_{k\in N\backslash
\left\{ i,j\right\} }a_{jk}UC_{j}^{x^{\prime }y^{\prime }}\left( \mathbf{1}%
^{jk}\right) .
\end{equation*}

By $\left( \ref{formula UC y}\right) ,$ for each $k\in N\backslash \left\{
i,j\right\} ,$ $UC_{i}^{x^{\prime }y^{\prime }}\left( \mathbf{1}^{ki}\right)
=UC_{j}^{x^{\prime }y^{\prime }}\left( \mathbf{1}^{kj}\right) =y^{\prime }.$
As for each $k\in N\backslash \left\{ i,j\right\} $ $a_{ki}\geq a_{kj}$, 
\begin{equation*}
\sum_{k\in N\backslash \left\{ i,j\right\} }a_{ki}UC_{i}^{x^{\prime
}y^{\prime }}\left( \mathbf{1}^{ki}\right) \geq \sum_{k\in N\backslash
\left\{ i,j\right\} }a_{kj}UC_{j}^{x^{\prime }y^{\prime }}\left( \mathbf{1}%
^{kj}\right) .
\end{equation*}

By $\left( \ref{formula UC 1-x-y}\right) ,$ for each $k,l\in N\backslash
\left\{ i,j\right\} ,$ $UC_{i}^{x^{\prime }y^{\prime }}\left( \mathbf{1}%
^{kl}\right) =UC_{j}^{x^{\prime }y^{\prime }}\left( \mathbf{1}^{kl}\right)
=1-x^{\prime }-y^{\prime }.$ Then, 
\begin{equation*}
\sum_{k,l\in N\backslash \left\{ i,j\right\} }a_{kl}UC_{i}^{x^{\prime
}y^{\prime }}\left( \mathbf{1}^{kl}\right) =\sum_{k,l\in N\backslash \left\{
i,j\right\} }a_{kl}UC_{j}^{x^{\prime }y^{\prime }}\left( \mathbf{1}%
^{kl}\right) .
\end{equation*}

Then, $UC^{x^{\prime }y^{\prime }}$ satisfies \textit{home order preservation%
}.

Finally, as for \textit{weak upper bound}, and using arguments similar to
those used in the proof of Theorem \ref{char AD+AN+MA} for \textit{maximum
aspirations}, it suffices to show it for each matrix $\boldsymbol{1}^{ij}\in 
\mathbf{\mathcal{P}}$, with $i,j\in N$, such that $i\neq j$. 
%then $UC^{x^{\prime }y^{\prime }}$ satisfies weak upper bound for each $A.$ 
We consider three cases.

\begin{enumerate}
\item By $\left( \ref{formula UC x}\right) ,$ $UC_{i}^{x^{\prime }y^{\prime
}}\left( \boldsymbol{1}^{ij}\right) =x^{\prime }\leq 1.$

\item By $\left( \ref{formula UC y}\right) ,$ $UC_{j}^{x^{\prime }y^{\prime
}}\left( \boldsymbol{1}^{ij}\right) =y^{\prime }\leq x^{\prime }\leq 1.$

\item By $\left( \ref{formula UC 1-x-y}\right) ,$ for each $k\in N\backslash
\left\{ i,j\right\} ,$ $UC_{k}^{x^{\prime }y^{\prime }}\left( \boldsymbol{1}%
^{ij}\right) =1-x^{\prime }-y^{\prime }.$ Now 
\begin{equation*}
1-x^{\prime }-y^{\prime }\leq 1\Leftrightarrow x^{\prime }+y^{\prime }\geq 0
\end{equation*}%
which holds always because $x^{\prime }\in \left[ \frac{1}{n},1\right] $ and 
$y^{\prime }\in \left[ \frac{1-x^{\prime }}{n-1},x^{\prime }\right] .$
\end{enumerate}

Then, $UC^{x^{\prime }y^{\prime }}$ satisfies \textit{weak upper bound}.
\bigskip

Conversely, let $R$ be a rule satisfying all the axioms from the statement.
Similarly to the proof of Theorem \ref{char AD+AN+MA}, we can find $%
x^{\prime },y^{\prime }$ such that, for each pair $i,j\in N$ with $i\neq j,$ 
$x^{\prime }=R_{i}\left( \boldsymbol{1}^{ij}\right) ,$ $y^{\prime
}=R_{j}\left( \boldsymbol{1}^{ij}\right) ,$ and for each $k\in N\backslash
\left\{ i,j\right\} ,$ $\frac{1-x^{\prime }-y^{\prime }}{n-2}=R_{k}\left( 
\boldsymbol{1}^{ij}\right) $.

By \textit{home order preservation}, 
\begin{equation*}
\frac{1-x^{\prime }-y^{\prime }}{n-2}\leq y^{\prime }\leq x^{\prime }.
\end{equation*}
Now,

\begin{itemize}
\item $x^{\prime }\geq \frac{1-x^{\prime }-y^{\prime }}{n-2}\geq \frac{%
1-2x^{\prime }}{n-2}.$ Notice that $x^{\prime }=y^{\prime }=\frac{1}{n}$
satisfies the previous conditions. Then, $x^{\prime }\geq \frac{1}{n}.$ By 
\textit{weak upper bound}, $x^{\prime }\leq 1.$ Thus, $\frac{1}{n}\leq
x^{\prime }\leq 1.$ Hence, $\max \left\{ x^{\prime },y^{\prime }\right\} \in %
\left[ \frac{1}{n},1\right] $ holds.

\item As $\frac{1-x^{\prime }-y^{\prime }}{n-2}\leq y^{\prime },$ it follows
that $\frac{1-x^{\prime }}{n-1}\leq y^{\prime }.$ Hence, $\min \left\{
x^{\prime },y^{\prime }\right\} \in \left[ \frac{1-\max \left\{ x^{\prime
},y^{\prime }\right\} }{n-1},\max \left\{ x^{\prime },y^{\prime }\right\} %
\right] $ holds.
\end{itemize}

Similarly to the proof of part $\left( 1\right) $ of Theorem \ref{char
AD+AN+MA}, we can prove that for each $i\in N,$ 
\begin{equation*}
R_{i}(A)=x^{\prime }\alpha _{i}\left( A\right) +\frac{1-2x^{\prime }}{n-2}%
\left( \left\vert \left\vert A\right\vert \right\vert -\alpha _{i}\left(
A\right) \right) -\left( x^{\prime }-y^{\prime }\right) CD_{i}\left(
A^{i0}\right) .
\end{equation*}

Now, 
\begin{eqnarray*}
x^{\prime }\alpha _{i}\left( A\right) +\frac{1-2x^{\prime }}{n-2}\left(
\left\vert \left\vert A\right\vert \right\vert -\alpha _{i}\left( A\right)
\right) &=&\frac{\left( n-2\right) x^{\prime }\alpha _{i}\left( A\right)
+\left( 1-2x^{\prime }\right) \left( \left\vert \left\vert A\right\vert
\right\vert -\alpha _{i}\left( A\right) \right) }{n-2} \\
&=&\frac{\left( 1-2x^{\prime }\right) \left\vert \left\vert A\right\vert
\right\vert }{n-2}+\frac{\left( nx^{\prime }-1\right) \alpha _{i}\left(
A\right) }{n-2} \\
&=&\frac{n\left( 1-x^{\prime }\right) }{n-1}\frac{\left\vert \left\vert
A\right\vert \right\vert }{n}+\left( \frac{1-nx^{\prime }}{\left( n-1\right)
\left( n-2\right) }\right) \left\vert \left\vert A\right\vert \right\vert +%
\frac{\left( nx^{\prime }-1\right) \alpha _{i}\left( A\right) }{n-2} \\
&=&\frac{n\left( 1-x^{\prime }\right) }{n-1}\frac{\left\vert \left\vert
A\right\vert \right\vert }{n}+\frac{nx^{\prime }-1}{n-1}\left( \frac{\left(
n-1\right) \alpha _{i}\left( A\right) -\left\vert \left\vert A\right\vert
\right\vert }{n-2}\right) .
\end{eqnarray*}

Let 
\begin{equation*}
\lambda =\frac{n\left( 1-x^{\prime }\right) }{n-1}.
\end{equation*}

As $\frac{1}{n}\leq x^{\prime }\leq 1$, it follows that $0\leq \lambda \leq
1.$ Thus, 
\begin{equation*}
x^{\prime }\alpha _{i}\left( A\right) +\frac{1-2x^{\prime }}{n-2}\left(
\left\vert \left\vert A\right\vert \right\vert -\alpha _{i}\left( A\right)
\right) =UC_{i}^{\lambda }\left( A\right) .
\end{equation*}

$\left( 2\right) $ Using arguments similar to those used in $\left( 1\right) 
$ we can prove that all extended $UC$ rules with $x^{\prime }\le y^{\prime }$
satisfy \textit{additivity}, \textit{anonymity}, \textit{weak upper bound}
and \textit{away order preservation}. Conversely, let $R$ be a rule
satisfying all those axioms. Similarly to the proof of Theorem \ref{char
AD+AN+MA}, we can find $x^{\prime },y^{\prime }$ such that, for each pair $%
i,j\in N$ with $i\neq j,$ $x^{\prime }=R_{i}\left( \boldsymbol{1}%
^{ij}\right) ,$ $y^{\prime }=R_{j}\left( \boldsymbol{1}^{ij}\right) ,$ and
for each $k\in N\backslash \left\{ i,j\right\} ,$ $\frac{1-x^{\prime
}-y^{\prime }}{n-2}=R_{k}\left( \boldsymbol{1}^{ij}\right) $.

Similarly to $\left( 1\right) $ above, we can prove that $y^{\prime }\in %
\left[ \frac{1}{n},1\right] $ and $x^{\prime }\in \left[ \frac{1-y^{\prime }%
}{n-1},y^{\prime }\right] $.

Similarly to the proof of part $\left( 2\right) $ of Theorem \ref{char
AD+AN+MA}, we can prove that for each $i\in N,$ 
\begin{equation*}
R_{i}(A)=y^{\prime }\alpha _{i}\left( A\right) +\frac{1-2y^{\prime }}{n-2}%
\left( \left\vert \left\vert A\right\vert \right\vert -\alpha _{i}\left(
A\right) \right) -\left( y^{\prime }-x^{\prime }\right) CD_{i}\left(
A^{0i}\right) .
\end{equation*}

Similarly to Case 1, we can prove that 
\begin{equation*}
y^{\prime }\alpha _{i}\left( A\right) +\frac{1-2y^{\prime }}{n-2}\left(
\left\vert \left\vert A\right\vert \right\vert -\alpha _{i}\left( A\right)
\right) =UC_{i}^{\lambda }\left( A\right),
\end{equation*}
where 
\begin{equation*}
\lambda =\frac{n\left( 1-y^{\prime }\right) }{n-1},
\end{equation*}
as desired.
\end{proof}

\bigskip

Theorem \ref{char AD+AN+WOP+WUB} also highlights the central role of \textit{%
concede-and-divide} as a rule to solve broadcasting problems. Extended $UC$
rules have two parts: one depending on the family of $UC$ rules $\left(
UC_{i}^{\lambda }\left( A\right) \right) $ and another depending on \textit{%
concede-and-divide} ($CD_{i}\left( A^{i0}\right) $ and $CD_{i}\left(
A^{0i}\right) ).$

\begin{remark}
\label{indep AD+AN+WOP+WUB} The axioms used in Theorem \ref{char
AD+AN+WOP+WUB} are independent.

$\left( 1\right) $ %Let $R^{6}$ be defined as follows. 
For each pair $\left\{ i,j\right\} \in N$, with $i\neq j$, and each $k\in N$%
, let 
\begin{equation*}
R_{k}^{6}\left( \boldsymbol{1}^{ij}\right) =\left\{ 
\begin{tabular}{ll}
2 & if $k\in \left\{ i,j\right\} $ \\ 
$\frac{-3}{n-2}$ & otherwise.%
\end{tabular}%
\right.
\end{equation*}%
We extend $R^{6}$ to each problem $A$ using \textit{additivity}. Namely, $%
R^{6}\left( A\right) =\sum\limits_{\left\{ i,j\right\} \subset
N}a_{ij}R^{3}\left( \boldsymbol{1}^{ij}\right) .$

Then, $R^{6}$ satisfies $AD,$ $AN$ and $HOP$ \textit{\ but fails }$WUB$%
\textit{.}

%Let $R^{7}$ be defined as follows. 
For each pair $\left\{ i,j\right\} \in N$, with $i\neq j$, and each $k\in N$%
, let %we define 
\begin{equation*}
R_{k}^{7}\left( \boldsymbol{1}^{ij}\right) =\left\{ 
\begin{tabular}{ll}
0 & if $k\in \left\{ i,j\right\} $ \\ 
$\frac{1}{n-2}$ & otherwise%
\end{tabular}%
\right.
\end{equation*}%
We extend $R^{7}$ to each problem $A$ using \textit{additivity}. Namely, $%
R^{7}\left( A\right) =\sum\limits_{\left\{ i,j\right\} \subset
N}a_{ij}R^{7}\left( \boldsymbol{1}^{ij}\right) .$

Then, $R^{7}$ satisfies $AD,$ $AN,$ and $WUB$ but fails $HOP$\textit{.}

Let $N=\left\{ 1,2,3\right\} $ and $R^{8}$ be the separable rule (see
Berganti\~{n}os and Moreno-Ternero, 2022c) where, for each $i,j\in N$, 
%$x^{ij}=R_{i}^{8}\left( \boldsymbol{1}^{ij}\right) ,$ $y^{ij}=R_{j}^{8}\left(\boldsymbol{1}^{ij}\right) $ and 
\begin{equation*}
\left( x^{ij}\right) _{i,j\in N}=\left( 
\begin{array}{ccc}
& 0.69 & 0.83 \\ 
0.63 &  & 0.90 \\ 
0.69 & 0.83 & 
\end{array}%
\right) \text{ and }\left( y^{ij}\right) _{i,j\in N}=\left( 
\begin{array}{ccc}
& 0.56 & 0.49 \\ 
0.49 &  & 0.35 \\ 
0.35 & 0.28 & 
\end{array}%
\right)
\end{equation*}

Then, $R^{8}$ satisfies $AD,$ $HOP$ and $WUB$ but fails $AN$\textit{.}

For each $A\in \mathcal{P}$, and each $i\in N,$ let %$R^{9}$ as%
\begin{equation*}
R_{i}^{9}(A)=\left\{ 
\begin{tabular}{ll}
$U\left( A\right) $ & if $\left\vert \left\vert A\right\vert \right\vert
\leq 10$ \\ 
$CD\left( A\right) $ & if $\left\vert \left\vert A\right\vert \right\vert
>10.$%
\end{tabular}%
\right.
\end{equation*}

Then, $R^{9}$ satisfies $AN,$ $HOP,$ and $WUB$ but fails $AD$. \bigskip

$\left( 2\right) $ $R^{6}$ satisfies $AD,$ $AN,$ $AOP$ \textit{\ but fails }$%
WUB$\textit{.}

$R^{7}$ satisfies $AD,$ $AN,$ and $WUB$ but fails $AOP$\textit{.}

Let $N=\left\{ 1,2,3\right\} $ and $R^{10}$ be the separable rule (see
Berganti\~{n}os and Moreno-Ternero, 2022c) where, for each $i,j\in N$, 
%$x^{ij}=R_{i}^{10}\left( \boldsymbol{1}^{ij}\right) ,$ $y^{ij}=R_{j}^{10}\left(\boldsymbol{1}^{ij}\right) $ and 
\begin{equation*}
\left( x^{ij}\right) _{i,j\in N}=\left( 
\begin{array}{ccc}
& 0.56 & 0.49 \\ 
0.49 &  & 0.35 \\ 
0.35 & 0.28 & 
\end{array}%
\right) \text{ and }\left( y^{ij}\right) _{i,j\in N}=\left( 
\begin{array}{ccc}
& 0.69 & 0.83 \\ 
0.63 &  & 0.90 \\ 
0.69 & 0.83 & 
\end{array}%
\right)
\end{equation*}

Then, $R^{10}$ satisfies $AD,$ $AOP$ and $WUB$ but fails $AN$\textit{.}

$R^{9}$ satisfies $AN,$ $AOP,$ and $WUB$ but fails $AD$.
\end{remark}

\bigskip

A trivial consequence of Theorem \ref{char AD+AN+WOP+WUB} and part 4 of
Proposition \ref{rel axioms} is that a rule satisfies \textit{additivity}, 
\textit{anonymity}, \textit{order preservation} and \textit{weak upper bound}
if and only if it is a \textit{$UC$ rule}. We do not stress this result
because \textit{anonymity} is actually redundant for this characterization,
as shown in Berganti\~{n}os and Moreno-Ternero (2022a).

\bigskip

We say that $R$ is an \textbf{extended }$UE$\textbf{\ rule} if there exist $%
x^{\prime },y^{\prime }\in \mathbb{R}$ such that for each $i\in N$ 
\begin{equation*}
R_{i}\left( A\right) =\left\{ 
\begin{tabular}{ll}
$UE_{i}^{\lambda }\left( A\right) -\left\vert x^{\prime }-y^{\prime
}\right\vert CD_{i}\left( A^{i0}\right) $ & if $x^{\prime }\geq y^{\prime }$
\\ 
$UE_{i}^{\lambda }\left( A\right) -\left\vert x^{\prime }-y^{\prime
}\right\vert CD_{i}\left( A^{0i}\right) $ & if $x^{\prime }\leq y^{\prime }$%
\end{tabular}%
\right.
\end{equation*}%
where $\max \left\{ x^{\prime },y^{\prime }\right\} \in \left[ \frac{1}{n},1%
\right] ,$ $\min \left\{ x^{\prime },y^{\prime }\right\} \in \left[ 0,1-\max
\left\{ x^{\prime },y^{\prime }\right\} \right] ,$ and $\lambda =\frac{%
n\left( 1-2\max \left\{ x^{\prime },y^{\prime }\right\} \right) }{n-1}\in %
\left[ \frac{-n}{n-2},1\right] $.

We denote by $UE^{x^{\prime }y^{\prime }}$ the rule associated to the
numbers $x^{\prime }$ and $y^{\prime }$, as above. Notice that all $UE$
rules are \textit{extended $UE$ rules} too (just take $x^{\prime }=y^{\prime
}$ and hence $x^{\prime }\in \left[ \frac{1}{n},0.5\right] $ and $\lambda
\in \left[ 0,1\right] ).$

\bigskip

The next theorem characterizes the \textit{extended $UE$ rules}.

\begin{theorem}
\label{char AD+AN+WOP+NN} The following statements hold:

$\left( 1\right) $ A rule satisfies \textit{additivity}, \textit{anonymity},
home \textit{order preservation} and \textit{non-negativity} if and only if
it is an extended $UE$ rule with $x^{\prime }\geq y^{\prime }.$

$\left( 2\right) $ A rule satisfies \textit{additivity}, \textit{anonymity},
away \textit{order preservation} and \textit{non-negativity} if and only if
it is an extended $UE$ rule with $x^{\prime }\leq y^{\prime }.$
\end{theorem}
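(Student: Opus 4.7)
The strategy mirrors that of Theorem \ref{char AD+AN+WOP+WUB}, replacing \textit{weak upper bound} by \textit{non-negativity} and letting $ES$ play the role that $U$ played there (with $CD$ still present as the correction term). I focus on statement $(1)$; statement $(2)$ follows by the symmetric argument, using \textit{away order preservation} in place of \textit{home order preservation} and $A^{0i}$ in place of $A^{i0}$, which swaps the roles of $x'$ and $y'$ throughout.

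For the forward direction, $U$, $ES$, and $CD$ all satisfy \textit{additivity} and \textit{anonymity} (by Theorem \ref{char gen MS} and direct inspection), so every extended $UE$ rule inherits both. I then evaluate $UE_i^{x'y'}(\boldsymbol{1}^{jk})$ on each basis matrix. As in the UC proof, since $CD_i\bigl((\boldsymbol{1}^{jk})^{i0}\bigr)$ either vanishes or simplifies, the three cases yield $R_i(\boldsymbol{1}^{ij}) = x'$, $R_j(\boldsymbol{1}^{ij}) = y'$, and $R_k(\boldsymbol{1}^{ij}) = (1-x'-y')/(n-2)$ for $k \notin \{i,j\}$. \textit{Home order preservation} then follows by the same term-by-term additivity comparison of $R_i(A)$ and $R_j(A)$ as in the UC proof, using $x' \geq y'$. \textit{Non-negativity} is checked on basis matrices using $x', y' \geq 0$ and $x' + y' \leq 1$, and extended to arbitrary $A \in \mathcal{P}$ by additivity, since all audiences are non-negative.

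For the reverse direction, given $R$ satisfying the four axioms, \textit{anonymity} together with efficiency allows me to write $R_i(\boldsymbol{1}^{ij}) = x'$, $R_j(\boldsymbol{1}^{ij}) = y'$, and $R_k(\boldsymbol{1}^{ij}) = (1-x'-y')/(n-2)$ for $k \notin \{i,j\}$, independently of the choice of $i,j$. \textit{Home order preservation} applied to $\boldsymbol{1}^{ij}$ forces $x' \geq y' \geq (1-x'-y')/(n-2)$; \textit{non-negativity} forces all three values to be non-negative, in particular $x' + y' \leq 1$. Chasing these inequalities yields $x' \geq 1/n$ and $y' \in [0, 1-x']$, which is the parameter range prescribed for extended $UE$ rules with $x' \geq y'$. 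Then \textit{additivity} decomposes $R_i(A) = \sum_{j,k} a_{jk}\, R_i(\boldsymbol{1}^{jk})$, which after collection yields, exactly as in Case 1 of the UC proof,
\[
R_i(A) = x'\, \alpha_i(A) + \frac{1-2x'}{n-2}\bigl(\left\vert\left\vert A\right\vert\right\vert - \alpha_i(A)\bigr) - (x'-y')\, CD_i(A^{i0}).
\]

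The remaining step, and the main technical obstacle, is to rewrite the first two terms as $UE_i^{\lambda}(A) = \lambda U_i(A) + (1-\lambda) ES_i(A)$ for the appropriate $\lambda$. Matching the coefficients of $\alpha_i(A)$ and $\left\vert\left\vert A\right\vert\right\vert$ on both sides, using $U_i = \left\vert\left\vert A\right\vert\right\vert/n$ and $ES_i = \alpha_i/2$, determines $\lambda$ uniquely in terms of $x'$, and the range $x' \in [1/n, 1]$ is precisely what places $\lambda$ in the admissible interval stated in the definition. The genuine difficulty is not conceptual but bookkeeping: verifying that the inequalities derived from HOP and NN recover exactly the parameter constraints $\max\{x',y'\} \in [1/n,1]$ and $\min\{x',y'\} \in [0, 1-\max\{x',y'\}]$, and that the case distinction $x'\geq y'$ versus $x' \leq y'$ is handled consistently between the definition of the rule and the two parts of the theorem.
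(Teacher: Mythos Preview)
Your proposal is correct and follows essentially the same route as the paper's proof: evaluate the rule on the basis matrices $\boldsymbol{1}^{ij}$ to obtain $x'$, $y'$, and $(1-x'-y')/(n-2)$, use \textit{home order preservation} and \textit{non-negativity} on those values to pin down the parameter range, and then identify the additive decomposition with $UE_i^{\lambda}(A) - (x'-y')CD_i(A^{i0})$ by matching coefficients. The only cosmetic slip is your phrase ``letting $ES$ play the role that $U$ played there'': in the $UC$ case one has $UC^{\lambda}=\lambda U+(1-\lambda)CD$, while here $UE^{\lambda}=\lambda U+(1-\lambda)ES$, so it is really $ES$ replacing $CD$ in the convex-combination part (with the $CD_i(A^{i0})$ correction term unchanged); this does not affect the argument.
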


\bigskip

\begin{proof}
$\left( 1\right) $ As $ES$ and $U$ satisfy \textit{additivity} and \textit{%
anonymity} (Theorem \ref{char AD+AN+MA} and Theorem \ref{char AD+AN+WOP+WUB}%
), so do all \textit{extended $UE$ rules}.

Using similar arguments to those used in the proof of part $\left( 1\right) $
of Theorem \ref{char AD+AN+WOP+WUB}, we can also prove that $UE^{x^{\prime
}y^{\prime }}$ with $x^{\prime }\geq y^{\prime }$ satisfies \textit{home
order preservation}. Finally, as for \textit{non-negativity}, and using
arguments similar to those used in the proof of Theorem \ref{char AD+AN+MA}
for \textit{maximum aspirations}, it suffices to prove that for each $j,k\in
N$ with $j\neq k$ and each $i\in N,$ $UE_{i}^{x^{\prime }y^{\prime }}\left( 
\mathbf{1}^{jk}\right) \geq 0.$ We consider several cases.

\begin{enumerate}
\item $j=i.$ 
\begin{eqnarray*}
UE_{i}^{x^{\prime }y^{\prime }}\left( \mathbf{1}^{ik}\right) &=&\frac{%
n-2nx^{\prime }}{n-2}U_{i}\left( \mathbf{1}^{ik}\right) +\frac{2nx^{\prime
}-2}{n-2}ES_{i}\left( \mathbf{1}^{ik}\right) -\left( x^{\prime }-y^{\prime
}\right) CD_{i}\left( \left( \mathbf{1}^{ik}\right) ^{i0}\right) \\
&=&\frac{n-2nx^{\prime }}{n-2}\frac{1}{n}+\frac{2nx^{\prime }-2}{n-2}\frac{1%
}{2}=x^{\prime }\geq 0.
\end{eqnarray*}

\item $k=i.$ 
\begin{equation*}
UE_{i}^{x^{\prime }y^{\prime }}\left( \mathbf{1}^{ji}\right) =\frac{%
n-2nx^{\prime }}{n-2}\frac{1}{n}+\frac{2nx^{\prime }-2}{n-2}\frac{1}{2}%
-\left( x^{\prime }-y^{\prime }\right) =y^{\prime }\geq 0.
\end{equation*}

\item $i\in N\backslash \left\{ j,k\right\} .$ 
\begin{equation*}
UE_{i}^{x^{\prime }y^{\prime }}\left( \mathbf{1}^{jk}\right) =\frac{%
n-2nx^{\prime }}{n-2}\frac{1}{n}-\left( x^{\prime }-y^{\prime }\right) \frac{%
-1}{n-2}=\frac{1-x^{\prime }-y^{\prime }}{n-2}\geq 0.
\end{equation*}
\end{enumerate}

Conversely, let $R$ be a rule satisfying all the axioms from the statement.
Similarly to the proof of Theorem \ref{char AD+AN+MA}, we can find $%
x^{\prime },y^{\prime }$ such that, for each pair $i,j\in N$ with $i\neq j,$ 
$x^{\prime }=R_{i}\left( \boldsymbol{\mathbf{1}}^{ij}\right) ,$ $y^{\prime
}=R_{j}\left( \boldsymbol{\mathbf{1}}^{ij}\right) ,$ and for each $k\in
N\backslash \left\{ i,j\right\} ,$ $\frac{1-x^{\prime }-y^{\prime }}{n-2}%
=R_{k}\left( \boldsymbol{1}^{ij}\right) $.

By \textit{home order preservation}, 
\begin{equation*}
\frac{1-x^{\prime }-y^{\prime }}{n-2}\leq y^{\prime }\leq x^{\prime }.
\end{equation*}

Then,%
\begin{equation*}
x^{\prime }\geq \frac{1-x^{\prime }-y^{\prime }}{n-2}\geq \frac{1-2x^{\prime
}}{n-2}.
\end{equation*}

Notice that $x^{\prime }=y^{\prime }=\frac{1}{n}$ satisfies the previous
conditions. Then, $x^{\prime }\geq \frac{1}{n}.$

By \textit{non-negativity}, $x^{\prime }\geq 0,$ $y^{\prime }\geq 0$ and 
\begin{equation*}
\frac{1-x^{\prime }-y^{\prime }}{n-2}\geq 0\Leftrightarrow x^{\prime
}+y^{\prime }\leq 1.
\end{equation*}

Then, $x^{\prime }\in \left[ \frac{1}{n},1\right] $ and $y^{\prime }\in %
\left[ 0,1-x^{\prime }\right] .$

Similarly to the proof of part $\left( 1\right) $ of Theorem \ref{char
AD+AN+MA}, we can prove that, for each $i\in N$, 
\begin{equation*}
R_{i}(A)=x^{\prime }\alpha _{i}\left( A\right) +\frac{1-2x^{\prime }}{n-2}%
\left( \left\vert \left\vert A\right\vert \right\vert -\alpha _{i}\left(
A\right) \right) -\left( x^{\prime }-y^{\prime }\right) CD_{i}\left(
A^{i0}\right) .
\end{equation*}%
Now, 
\begin{equation*}
x^{\prime }\alpha _{i}\left( A\right) +\frac{1-2x^{\prime }}{n-2}\left(
\left\vert \left\vert A\right\vert \right\vert -\alpha _{i}\left( A\right)
\right) =\frac{n\left( 1-2x^{\prime }\right) }{n-2}\frac{\left\vert
\left\vert A\right\vert \right\vert }{n}+\frac{2\left( nx^{\prime }-1\right) 
}{n-2}\frac{\alpha _{i}\left( A\right) }{2}.
\end{equation*}

Let 
\begin{equation*}
\lambda =\frac{n\left( 1-2x^{\prime }\right) }{n-2}.
\end{equation*}

As $\frac{1}{n}\leq x^{\prime }\leq 1$, it follows that $\frac{-n}{n-2}\leq
\lambda \leq 1.$ Thus, 
\begin{equation*}
x^{\prime }\alpha _{i}\left( A\right) +\frac{1-2x^{\prime }}{n-2}\left(
\left\vert \left\vert A\right\vert \right\vert -\alpha _{i}\left( A\right)
\right) =UE_{i}^{\lambda }\left( A\right) .
\end{equation*}
%\bigskip

$\left( 2\right) $ Using similar arguments to those used in $\left( 1\right) 
$ we can prove that any \textit{extended $UC$ rule} with $x^{\prime }\leq
y^{\prime }$ satisfies \textit{additivity}, \textit{anonymity}, \textit{%
non-negativity} and \textit{away order preservation}. Conversely, let $R$ be
a rule satisfying all those axioms. Then, similarly to the proof of Theorem %
\ref{char AD+AN+MA}, we can find $x^{\prime },y^{\prime }$ such that, for
each pair $i,j\in N$ with $i\neq j,$ $x^{\prime }=R_{i}\left( \boldsymbol{1}%
^{ij}\right) ,$ $y^{\prime }=R_{j}\left( \boldsymbol{1}^{ij}\right) ,$ and
for each $k\in N\backslash \left\{ i,j\right\} ,$ $\frac{1-x^{\prime
}-y^{\prime }}{n-2}=R_{k}\left( \boldsymbol{1}^{ij}\right) $. Similarly to $%
\left( 1\right) $, we can prove that $x^{\prime }\leq y^{\prime },$ $%
y^{\prime }\in \left[ \frac{1}{n},1\right] $, $x^{\prime }\in \left[
0,1-y^{\prime }\right] ,$ $\lambda =\frac{n\left( 1-2y^{\prime }\right) }{n-2%
},$ and for each $i\in N$ 
\begin{equation*}
R_{i}(A)=UE_{i}^{\lambda }\left( A\right) -\left( y^{\prime }-x^{\prime
}\right) CD_{i}\left( A^{0i}\right) ,
\end{equation*}%
as desired.
\end{proof}

\bigskip

Theorem \ref{char AD+AN+WOP+NN} also highlights the central role of \textit{%
concede-and-divide} as a rule to solve broadcasting problems. Extended $UE$
rules have two parts: one depending on the family of $UE$ rules $\left(
UE_{i}^{\lambda }\left( A\right) \right) $ and another depending on \textit{%
concede-and-divide} ($CD_{i}\left( A^{i0}\right) $ and $CD_{i}\left(
A^{0i}\right) ).$

\begin{remark}
\label{indep AD+AN+WOP+NN}$\left( 1\right) $ The axioms used in Theorem \ref%
{char AD+AN+WOP+NN} are independent.

$R^{6},$ defined as in Remark \ref{indep AD+AN+WOP+WUB}, satisfies $AD,$ $%
AN, $ and $HOP$ but fails $NN.$

$R^{7},$ defined as in Remark \ref{indep AD+AN+WOP+WUB}, satisfies $AD,$ $%
AN, $ and $NN$ but fails $HOP.$

Let $N=\left\{ 1,2,3\right\} $ and $R^{11}$ be the separable rule (see
Berganti\~{n}os and Moreno-Ternero, 2022c) where for each pair $i,j\in N$, 
%$x^{ij}=R_{i}^{11}\left( \boldsymbol{1}^{ij}\right) ,$ $y^{ij}=R_{j}^{11}\left(\boldsymbol{1}^{ij}\right) $, 
\begin{equation*}
\left( x^{ij}\right) _{i,j\in N}=\left( 
\begin{array}{ccc}
& 0.50 & 0.60 \\ 
0.45 &  & 0.65 \\ 
0.50 & 0.60 & 
\end{array}%
\right) \text{ and }\left( y^{ij}\right) _{i,j\in N}=\left( 
\begin{array}{ccc}
& 0.40 & 0.35 \\ 
0.35 &  & 0.25 \\ 
0.25 & 0.20 & 
\end{array}%
\right)
\end{equation*}

Then, $R^{11}$ that satisfies $AD,$ $HOP,$ and $NN$ but fails $AN$.

For each $A\in \mathcal{P}$, and each $i\in N,$ let 
%we define the rule $R^{12}$
as%
\begin{equation*}
R_{i}^{10}(A)=\left\{ 
\begin{tabular}{ll}
$U\left( A\right) $ & if $\left\vert \left\vert A\right\vert \right\vert
\leq 10$ \\ 
$ES\left( A\right) $ & if $\left\vert \left\vert A\right\vert \right\vert
>10.$%
\end{tabular}%
\right.
\end{equation*}

Then, $R^{12}$ satisfies $AN,$ $HOP,$ and $NN$ but fails $AD.$ \bigskip

$\left( 2\right) $ $R^{6},$ defined as in Remark \ref{indep AD+AN+WOP+WUB},
satisfies $AD,$ $AN,$ and $AOP$ but fails $NN.$

$R^{7},$ defined as in Remark \ref{indep AD+AN+WOP+WUB}, satisfies $AD,$ $%
AN, $ and $NN$ but fails $AOP.$

Let $N=\left\{ 1,2,3\right\} $ and $R^{13}$ be the separable rule (see
Berganti\~{n}os and Moreno-Ternero, 2022c) where for each $i,j\in N$, 
%$x^{ij}=R_{i}^{13}\left( \boldsymbol{1}^{ij}\right) ,$ $y^{ij}=R_{j}^{13}\left(\boldsymbol{1}^{ij}\right) $, 
\begin{equation*}
\left( x^{ij}\right) _{i,j\in N}=\left( 
\begin{array}{ccc}
& 0.40 & 0.35 \\ 
0.35 &  & 0.25 \\ 
0.25 & 0.20 & 
\end{array}%
\right) \text{ and }\left( y^{ij}\right) _{i,j\in N}=\left( 
\begin{array}{ccc}
& 0.50 & 0.60 \\ 
0.45 &  & 0.65 \\ 
0.50 & 0.60 & 
\end{array}%
\right)
\end{equation*}

Then, $R^{13}$ that satisfies $AD,$ $AOP,$ and $NN$ but fails $AN$.

Finally, $R^{10}$, defined as in Remark 3, satisfies $AN,$ $AOP,$ and $NN$
but fails $AD.$
\end{remark}

\bigskip

A trivial consequence of Theorem \ref{char AD+AN+WOP+NN} and part 4 of
Proposition \ref{rel axioms} is that a rule satisfies \textit{additivity}, 
\textit{anonymity}, \textit{order preservation} and \textit{non-negativity}
if and only if it is a $UE$ rule. We do not stress this result because 
\textit{anonymity} is actually redundant for this characterization, as shown
in Berganti\~{n}os and Moreno-Ternero (2022a).

\section{Conclusion}

We have studied in this paper the impact of \textit{anonymity} as an axiom
for \textit{broadcasting problems}.

On the one hand, we have shown that, when combined with some axioms, it is
possible to obtain new characterizations of rules and families of rules
already studied in \textit{broadcasting problems}. For instance, by adding 
\textit{essential team} (as well as \textit{additivity}), we characterize 
\textit{concede-and-divide}. With \textit{null team} instead of \textit{%
essential team}, we characterize the family of \textit{generalized split
rules}, which generalize the focal \textit{equal split rule}.

On the other hand, we have also shown that, when combined with some other
axioms, it is possible to characterize new families of rules: the so called 
\textit{extended }$EC$\textit{\ rules}, \textit{extended }$UC$\textit{\ rules%
}, and \textit{extended }$UE$\textit{\ rules}. All these extended rules can
be described as the sum of two components. In the first one, we apply to the
original problem a rule within the corresponding family of rules already
studied in \textit{broadcasting problems} ($EC$ rules, $UC$ rules or $UE$
rules). In the second component, we always apply the focal \textit{%
concede-and divide} to the resulting problem after nullifying some of
audiences in the original problem.

\bigskip

%\newpage

\bigskip

\bigskip

\bigskip

\end{document}